\newtheorem{theorem}{Theorem}
\newtheorem{proposition}[theorem]{Proposition}
\theoremstyle{definition}
\theoremstyle{definition}
\theoremstyle{definition}\newtheorem{remark}[theorem]{Remark}
\theoremstyle{definition}\newtheorem{example}[theorem]{Example}
\theoremstyle{definition}
\theoremstyle{definition}
\theoremstyle{definition}
\theoremstyle{definition}
\theoremstyle{definition}\newtheorem{conclusion}[theorem]{Conclusion}
\def\proofof [#1] {\noindent {\bf Proof of #1. } }
\def\v #1.{\mathord{\raise 3pt\hbox{\mathsurround=0pt $\mathop\vee\limits^{#1}$\mathsurround=5pt}}}
\def\al #1.{{\mathcal{#1}}}
\renewcommand{\H}{\mathcal{H}}
\renewcommand{\L}{\mathcal{L}}
\newcommand{\N}{\mathbb{N}}
\newcommand{\NN}{\mathbb{N}_0}
\newcommand{\R}{\mathbb{R}}
\newcommand{\C}{\mathbb{C}}
\newcommand{\T}{\mathcal{T}}
\newcommand{\EE}{\mathcal{E}}
\newcommand{\id}{\operatorname{id}}
\newcommand{\E}{\mathbb{E}}
\newcommand{\gl}{\mathfrak{gl}}
\renewcommand{\lg}{\mathfrak{g}}
\newcommand{\unit}{\mathbf{1}}
\renewcommand{\t}{\otimes}
\newcommand{\V}{\operatorname{Var}}
\newcommand{\Mat}{\operatorname{M}}
\newcommand{\GL}{\operatorname{GL}}
\newcommand{\tr}{\operatorname{tr }}
\newcommand{\Ad}{\operatorname{Ad }}
\newcommand{\ad}{\operatorname{ad }}
\newcommand{\dom}{\operatorname{dom}}
\newcommand{\rmd}{\operatorname{d}}
\newcommand{\rmi}{\operatorname{i}}
\newcommand{\drift}{\operatorname{drift}}
\newcommand{\rme}{\operatorname{e}}
\newcommand{\Cci}{C^\infty}
\newcommand{\ra}{\rightarrow}
\newcommand{\boxy}{\hfill $\Box$}
\newcommand{\ie}{{i.e.,\/}\ }
\newcommand{\eg}{{e.g.\/}\ }
\newcommand{\cf}{{cf.\/}\ }
\title{A Continuous-Time Diffusion Limit Theorem\\ for Dynamical Decoupling and\\ Intrinsic Decoherence}
\author{
\phantom{X}\\
{\sc Robin Hillier}$^{1}$,
{\sc Christian Arenz}$^2$,
{\sc Daniel Burgarth}$^{2}$\\
\phantom{X}\\
${}^1$ Department of Mathematics and Statistics\\ Lancaster University\\ Lancaster LA1 4YF, UK\\
\phantom{X}\\
${}^2$ Department of Mathematics\\ Aberystwyth University\\ Aberystwyth SY23 2BZ, UK
}
\begin{document}

\maketitle

\begin{abstract}
We discuss a few mathematical aspects of random dynamical decoupling, a key tool procedure in quantum information theory. In particular, we place it in the context of discrete stochastic processes, limit theorems and CPT semigroups on matrix algebras. We obtain precise analytical expressions for expectation and variance of the density matrix and fidelity over time in the continuum-time limit depending on the system Lindbladian, which then lead to rough short-time estimates depending only on certain coupling strengths. We prove that dynamical decoupling does not work in the case of intrinsic (i.e., not environment-induced) decoherence, and together with the above-mentioned estimates this yields a novel method of partially identifying intrinsic decoherence.
\end{abstract}

\section{Introduction}\label{sec:one}

The aim of this article is two-fold: first, to provide an analytical description of random dynamical decoupling because analytical expressions are often more manageable than combinatoric-numerical ones; second, to use this description to propose a partial method of detecting intrinsic decoherence of quantum systems.

Dynamical decoupling is a method applied to stabilise states of quantum registers against undesired time-evolution. Originally invented in NMR technology, it has been generalised to a wider context, in particular in quantum information theory \cite{LB,VKL}. It works by application of repeated instantaneous unitary correction pulses on the quantum register, perturbing the original time-evolution. The procedure is particularly interesting and effective when performed in a random way \cite{vi3,vi4}. 

While several general estimates and specialisations of the procedure have been proposed in the past (\cf \cite{LB} and the references therein), our focus here is on finding handy analytical descriptions of the time-evolution of expectation and distribution of physically interesting quantities like the density matrix process or the gate fidelity process arising from this random time-evolution. We believe that such descriptions are a valuable tool in future computations and enable predictions in experiments.

We would like to provide now a rough overview of the content of this article. Let $(\H,\L)$ stand for a generic finite-dimensional quantum system, with $\H$ a finite-dimensional complex Hilbert space and $\L$ a possibly time-dependent Lindblad generator (\cf \cite{pb,wo} for general background information). We start in Section \ref{sec:two} by introducing dynamical decoupling and show that the decoupling condition \eqref{eq:deccond2} can be satisfied only if $\L=\rmi [H,\cdot]$ for some Hamiltonian $H$. This might sound like a contradiction since dynamical decoupling aims to eliminate decoherence (noise) arising from open systems. It can be resolved by differentiating between \emph{intrinsic} and \emph{extrinsic decoherence}: the latter one is where decoherence arises from interaction with an actual quantum heat bath or environment such that the total space time-evolution is unitary, the former one is where decoherence is actually the time-evolution of a closed system \cite{Adler}. It is unclear whether intrinsic decoherence may appear in nature, and it would, of course, contradict the axiom of unitary time-evolution. But in order to find out whether it may exist or whether the axiom of unitarity is always verified, one has to perform experiments and develop mathematical tools.

To this end, Sections \ref{sec:three} and \ref{sec:four} provide a probabilistic-analytical approach to dynamical decoupling, namely: we set up a probabilistic description of a random walk in the completely positive trace-preserving (CPT) maps of the quantum system arising from the random correction pulses; then we study the continuum-limit of this random walk under a suitable scaling, which becomes a Gaussian (Markov) process in the CPT maps. We use this to determine the expectation and higher moments of the density matrix process $\rho_t$. In the \ref{sec:four}th section we then compute the expectation of the gate fidelity, which might be regarded as a mean fidelity when averaging over all states on $B(\H)$ in a suitable manner. We illustrate all constructions and considerations with an easy example that shall accompany us through the paper.

Up to this point, things were quite general, but this is where we can turn to our second aim: distinguishing between intrinsic and extrinsic decoherence (with bounded Hamiltonian dilations, \cf \cite[App.]{ABH}). We therefore specialise in the final Section \ref{sec:five} on providing approximative bounds for the gate fidelity in these two extremal cases together with a recipe which should enable the experimenter to determine the type of decoherence present in his setting. Ideally he should just know the pulse length $\tau$, the total time $t$ of evolution, and the coupling strength of the undesired decoherence. In some cases unfortunately some further input is needed. However, the overall moral is roughly speaking the following: the rate of decoherence decreases to $0$ when $\tau\ra 0$ if decoherence is extrinsic and due to bounded interaction \cite{vi3,ABH}; it remains essentially unaffected by the decoupling procedure if decoherence is intrinsic. In other words, if random dynamical decoupling with $\tau\ra 0$ does not eliminate decoherence, then it was intrinsic or due to unbounded interaction! Model studies and illustrations of this procedure can be found in the companion article \cite{ABH}.

\bigskip

{
\noindent\textbf{Acknowledgements.} We would like to thank Micha\l\  Gnacik, Lorenza Viola and the referee for useful discussions and/or comments on the manuscript. Moreover, RH would like to thank Gernot Alber and Burkhard K\"ummerer for guidance in his master thesis several years ago, in which a special case of Section \ref{sec:three} had been developed. 
}

\section{The concept of dynamical decoupling}\label{sec:two}

Let us start with some notation used throughout the article. We shall denote our quantum system in question by $(\H,\L)$, with $\H$ a finite-dimensional complex Hilbert space (of dimension $d_\H$) and we denote the adjoint of linear maps on it by `$^*$'; $\L$ is the Lindblad operator on $B(\H)$, generating the completely positive trace-preserving (CPT) time evolution maps $\alpha_t=\rme^{t\L}$, $t\in\R_+$, of the quantum system (\cf \cite{pb,wo} for general background information). Let us abbreviate $A:=B(\H)$, which has dimension $d=d_\H^2$ and which becomes a Hilbert space again with scalar product $(x,y)\in A\times A \mapsto \langle x, y\rangle := \tr(x^*y)$, and we denote adjoints of maps on this Hilbert space by `$^\dagger$'. We write $\Ad$ (or $\ad$) for the adjoint representation of the unitary group (or its Lie algebra, respectively) on $A$, \ie $\Ad(v)(x)=v x v^*$ and $\ad(H)(x)=[H,x]$, for $v,H,x\in A$ with $v$ unitary and $H$ selfadjoint. 

By a \emph{decoupling set} in $A$ we mean a finite group of unitaries $V:=(v_j)_{j\in J}\subset A$ such that $0\in J$ and
\begin{equation}\label{eq:deccond1}
v_0=\unit, \quad \sum_{j\in J} \Ad(v_j)(x) \in \C\unit, \quad \forall x\in A.
\end{equation}
Notice that in this case we automatically have $\frac{1}{|J|}\sum_{j\in J} v_j x v_j^*=\frac{1}{d_\H}\tr(x)\unit$.

\begin{example}\label{ex:1}
The standard illustrative example of a finite-dimensional quantum system to keep in mind throughout this paper is an $N$-qubit quantum system, so $\H=(\C^2)^{\otimes N}$ and $d=2^{2N}$; there typically the decoupling set $V$ consists of the $4^N$ different combinations of Pauli matrices $\{\unit, \sigma_1,\sigma_2,\sigma_3\}$ on the tensor factors. Here $v_j^*=v_j$., for all $j$.
\end{example}

Given the CPT semigroup of time evolution maps $(\alpha_t)_{t\in\R_+}$ of our system and a (``short") time $\tau$, consider the externally modified time evolution
\begin{equation}\label{eq:randomwalk}
\alpha^{(\tau)}_{(n+1)\tau}= \Ad(v_{j_0}^* v_{j_n})\circ\alpha_\tau \circ \Ad(v_{j_n}^*v_{j_{n-1}})\circ \alpha_\tau\circ \ldots \circ \alpha_\tau \circ\Ad(v_{j_1}^*v_{j_0}),
\end{equation}
where $n\in\N$, and $(j_i)_{i\in\NN}$ forms a certain sequence in $J$ with $j_0 =0$, meaning we apply \emph{instantaneous decoupling} or \emph{correction pulses} $v_{j_i}^*v_{j_{i-1}}$ at time $i\tau$; set $\alpha^{\tau}_t=\alpha_{t-n\tau}\circ\alpha^{\tau}_{n\tau}$ whenever $t\in [n\tau,(n+1)\tau)$. The sequence $j_i$ can be fixed or random, leading to \emph{deterministic} or \emph{random dynamical decoupling}. It turns out that random decoupling has many advantages \cite{vi3,vi4,kas} and moreover is mathematically more interesting, and that is why we want to investigate it here.

Our first goal is to find an analytical description of the externally modified time evolution $\alpha^{(\tau)}_t$. In the random setting, $(\alpha^{(\tau)}_t)_{t\in\tau\N}$ becomes a stochastic process (a random walk with steps lasting time $\tau$) induced by the process $(v_{j_t})_{t\in\tau\N}$ with independent identically distributed (iid) and equidistributed \cite{Fel} increments in $V$, and we are interested in the limit $\tau\ra 0$, which would enable nice analytical expressions.

Since $\alpha_t = \exp( t\L)$, we find 
\begin{equation}\label{eq:v-alpha}
\alpha^{(\tau)}_{(n+1)\tau}= \exp(\tau \Ad(v_{j_n})\circ \L \circ \Ad(v_{j_n}^*)) \circ \alpha^{(\tau)}_{n\tau},
\end{equation}
so the increment during the time interval $(t-\tau,t)$ is given by $\exp(\tau \Ad(v_{j_t})\circ \L \circ \Ad(v_{j_t}^*))$.

We say that $V$ satisfies the \emph{decoupling condition} for $(\H,\L)$ if
\begin{equation}\label{eq:deccond2}
\sum_{j\in J} \Ad(v_j)\circ \L \circ \Ad(v_j^*)=0.
\end{equation}
The idea behind this condition is that it ensures cancellation of interaction at first order in $\tau\|\L\|$, \ie for short time $\tau$, and thus higher order terms contribute.

We say (the time evolution of) our quantum system $(\H,\L)$ is \emph{purely unitary} if $\L=\rmi \ad(H)$, with $H\in A$ selfadjoint, because in this case $\alpha_t=\rme^{t\L}$ is induced by a one-parameter family of unitary matrices; in this case $\L^\dagger=-\L$. The ``opposite case'', namely where $\L^\dagger=\L$ we call \emph{purely dephasing}.
 
\begin{theorem}\label{lem:deccond-unitary}
A decoupling set for $(\H,\L)$ satisfies the decoupling condition \eqref{eq:deccond2} iff $(\H,\L)$ is purely unitary.
\end{theorem}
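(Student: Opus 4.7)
The plan is to invoke the Gorini--Kossakowski--Sudarshan--Lindblad (GKSL) structure theorem and write
\[
\L(x) = \rmi\ad(H)(x) + \L_D(x),\qquad \L_D(x)=\sum_k\bigl(L_k^* x L_k - \tfrac12\{L_k^* L_k, x\}\bigr),
\]
for some selfadjoint $H\in A$ and operators $L_k\in A$; both directions then reduce to a clean computation on this canonical form.

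For the ``$\Leftarrow$'' direction, substitution of $\L=\rmi\ad(H)$ into \eqref{eq:deccond2}, together with the identity $\Ad(v_j)\circ\ad(H)\circ\Ad(v_j^*)=\ad(\Ad(v_j)(H))$, gives
\[
\sum_{j\in J}\Ad(v_j)\circ\L\circ\Ad(v_j^*) = \rmi\,\ad\!\Bigl(\sum_{j\in J}\Ad(v_j)(H)\Bigr),
\]
which vanishes because \eqref{eq:deccond1} places $\sum_j\Ad(v_j)(H)$ in $\C\unit$ and $\ad$ annihilates scalars.

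For the ``$\Rightarrow$'' direction, the same computation shows the Hamiltonian piece always contributes zero to \eqref{eq:deccond2}, so the condition reduces to $\sum_j\Ad(v_j)\circ\L_D\circ\Ad(v_j^*)=0$. A short calculation shows each conjugated summand is itself a dissipator with Kraus operators $v_j L_k v_j^*$, so the whole sum is again a single dissipator, now with operators indexed by $(j,k)$, that by hypothesis vanishes identically on $A$. The crucial step is then the zero-dissipator lemma: \emph{a Lindblad dissipator vanishes identically iff all of its Kraus operators lie in $\C\unit$}. I would prove it by evaluating at the rank-one element $|\psi\rangle\langle\psi|$ for a unit vector $\psi$, where a direct computation gives
\[
\langle\psi|\L_D(|\psi\rangle\langle\psi|)|\psi\rangle = \sum_k\bigl(|\langle\psi|L_k|\psi\rangle|^2-\|L_k\psi\|^2\bigr)\leq 0
\]
by Cauchy--Schwarz, with equality iff every $L_k$ stabilises the line $\C\psi$; letting $\psi$ range over all unit vectors forces each $L_k$ to be a scalar. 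Applied to our setting this gives $v_j L_k v_j^*\in\C\unit$ for every $j,k$; specialising to $v_0=\unit$ yields $L_k\in\C\unit$, hence $\L_D=0$ and $\L=\rmi\ad(H)$.

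The main obstacle I expect is this zero-dissipator lemma: although standard, its proof rests on a genuine positivity/Cauchy--Schwarz argument that is not entirely trivial. A secondary technical point is that the GKSL decomposition is not unique (one can reabsorb scalar shifts of the $L_k$ into $H$), but this is harmless here: whichever representative is chosen, the decoupling condition forces \emph{that} dissipator to vanish, so $\L$ equals the corresponding $\rmi\ad(H)$.
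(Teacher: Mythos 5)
Your proof is correct, but it takes a genuinely different route from the paper's. The paper writes $\L$ in Christensen--Evans form $\L(x)=\Psi(x)+ax+xa^*$ with $\Psi$ completely positive (normalised w.l.o.g.\ so that $\Psi$ is not a multiple of $\id_A$), observes via \eqref{eq:deccond1} that the $ax+xa^*$ part always averages to the scalar term $\tfrac{1}{d_\H}\tr(a+a^*)\,x$, and then argues by contradiction: if the decoupling condition held with $\Psi\neq 0$, the averaged completely positive map would have to equal a multiple of the identity map, and testing this on rank-one projections $p$ against vectors $\xi$ with $p\xi=0$ (each term in the average being $\ge 0$, hence $=0$; in particular the $v_0=\unit$ term) forces every vector to be an eigenvector of each Kraus operator of $\Psi$, so $\Psi$ would be a multiple of $\id_A$ after all. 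You instead start from the GKSL form and isolate a clean, reusable zero-dissipator lemma, proved via the Cauchy--Schwarz equality case on the diagonal matrix elements $\langle\psi,\L_D(|\psi\rangle\langle\psi|)\psi\rangle$; for you the decoupling condition literally states that the summed dissipator with operators $v_jL_kv_j^*$ vanishes, and the $j=0$ terms give $L_k\in\C\unit$, hence $\L_D=0$. The two arguments share the same engine (positivity tested at rank-one projections, a sum of one-signed terms forced to vanish termwise, the special element $v_0=\unit$, and ``every vector an eigenvector $\Rightarrow$ scalar''), but your version buys a direct argument without the contradiction set-up and the normalisation of $\Psi$, and a lemma of independent use, at the price of invoking the full finite-dimensional GKSL theorem where the paper only needs the weaker Christensen--Evans decomposition; your remark on the non-uniqueness of the GKSL representation disposes of the only real ambiguity. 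One cosmetic point: since the paper's $\L$ generates trace-preserving (Schr\"odinger-picture) maps, the canonical dissipator is $\sum_k\bigl(L_kxL_k^*-\tfrac12\{L_k^*L_k,x\}\bigr)$ rather than the unital (Heisenberg-picture) form you wrote; your computation goes through verbatim with either convention, so nothing breaks.
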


\begin{proof}
We write out the generator $\L$ in Christensen-Evans form \cite{ce, wo}
\[
\L(x) = \Psi(x) + a x + x a^*, \quad \forall x\in A, 
\]
with a certain $a\in A$ and completely positive $\Psi$ which is not a multiple of $\id_A$ (w.l.o.g., because adding $2\lambda \id_A$ to $\Psi$ has the same result on $\L$ as adding $\lambda \unit$ to $a$). Suppose first that $\alpha$ is purely unitary; then $\Psi=0$ and $a^*=-a$. From \eqref{eq:deccond1}, with $J$ indexing the decoupling set $V=\{v_j:j\in J\}$ as above, we obtain
\begin{align*}
\frac{1}{|J|}\sum_{j\in J} \Ad(v_j)\circ \L \circ\Ad(v_j^*)(x) 
=& \frac{1}{|J|}\sum_{j\in J} \Ad(v_j)(a \Ad(v_j^*)(x)+\Ad(v_j^*)(x)a^*)\\ 
=& \frac{1}{|J|} \sum_{j\in J} (\Ad(v_j)(a) x + x \Ad(v_j)(a^*))\\
=& \frac{1}{d_\H}\tr(a+a^*)x =0,\quad x\in A, 
\end{align*}
so $V$ satisfies the decoupling condition. If instead $\alpha$ is not purely unitary, we have $\Psi\not=0$ and hence
\[
\Phi:= \frac{1}{|J|}\sum_{j\in J} \Ad(v_j)\circ \Psi \circ \Ad(v_j^*)
\]
is completely positive and nonzero. Suppose $\Phi(x)$ equals
\[
- \frac{1}{|J|} \sum_{j\in J} (\Ad(v_j)(a \Ad(v_j^*)(x))+\Ad(v_j)(\Ad(v_j^*)(x)a^*)) = - \frac{1}{d_\H}\tr(a+a^*) x, \quad x\in A.
\]
Then, for every rank-one projection $p\in A$, we have $\Phi(p)=-\frac{1}{d_\H}\tr(a+a^*) p$. But for every $\xi\in\H$ with $p\xi=0$, we have
\[
0= -\frac{1}{d_\H}\tr(a+a^*) \langle \xi, p\xi\rangle = \langle \xi, \Phi(p)\xi\rangle
=  \frac{1}{|J|} \sum_{j\in J} \langle \xi, \Ad(v_j)\circ\Psi\circ \Ad(v_j^*)(p)\xi\rangle,
\]
with each single term $\ge 0$, due to the positivity of $\Psi$ and the scalar product, and thus actually $=0$. In particular, since $v_0=\unit$, we have $\langle \xi, \Psi(p)\xi\rangle=0$, so $\Psi(p) \in \R_+p$.  Let us write $\Psi$ in (minimal) Kraus form with $\operatorname{rank}(\Psi)$ its Kraus rank and certain $b_i\in A$ \cite{wo}:
\[
\Psi(x) = \sum_{i=1}^{\operatorname{rank}(\Psi)} b_i x b_i^*, \quad x\in A.
\] 
This entails then, for any two mutually orthogonal vectors $\eta,\xi\in H$,
\[
0 = \langle \xi, \Psi(|\eta\rangle\langle\eta |) \xi \rangle
= \sum_{i=1}^{\operatorname{rank}(\Psi)} | \langle \xi, b_i \eta \rangle |^2.
\]
Hence, for every $i$, we see that $b_i \eta\in \C\eta$, or in other words, $\eta$ must be an eigenvector of $b_i$. This holds for every $\eta\in\H$, so $b_i\in \C \unit$, and thus $\Psi$ is a multiple of $\id_A$, which contradicts our initial assumptions. Therefore, $\Phi(x)\not= -\frac{1}{d_\H}\tr(a+a^*) x$, and
\[
\frac{1}{|J|}\sum_{j\in J} \Ad(v_j)\circ \L \circ \Ad(v_j^*)\not= 0,
\]
\ie $V$ on $(\H,\L)$ does not satisfy the decoupling condition.
\end{proof}

Despite this result, it will turn out in the course of this paper that dynamical decoupling is still interesting beyond the unitary case.

\section{The continuous-time limit of random dynamical decoupling}\label{sec:three}

We continue with the notation and concepts introduced in the previous section. Let us, in particular, assume all increments $v_{j_i}$, with $i\in\N$, in our random walk $(v_{j_i})_{i\in\NN}$ of decoupling pulses to be iid and equidistributed in $V$ as in the preceding section. The induced random walk $(\alpha^{(\tau)}_{(n+1)\tau})_{n\in\N}$ lies in the completely positive maps on $A$ according to \eqref{eq:randomwalk} and \eqref{eq:v-alpha}. Moreover, since completely positive maps are linear maps of the Hilbert space $A$ and since all the increments are invertible, the random walk actually lies in the group $\GL(A)$ of invertible linear maps of $A$, and $\L\in\gl(A)$, the Lie algebra of $\GL(A)$. This induced random walk has again iid increments and is described by the measure
\begin{equation}\label{eq:mutau}
\mu ^{(\tau)}:=\frac{1}{|J|} \sum_{j\in J} \delta
_{\exp(\tau \Ad(v_j)\circ \L \circ \Ad(v_j^*))}.
\end{equation}
We would like to investigate it in the limit $\tau\ra 0$. However, since $\tau$ is an actual physical quantity in our set-up, we keep it and instead consider a fictitious limit, which should be good for small $\tau$, as explained below. Considering simply $\mu^{(\tau)}$ and the limit of $\tau\ra 0$, we would obtain a drift-like expression without fluctuations, which is not a really physical result but a good first approximation, \cf \eqref{eq:mu-n2} and Remark \ref{rem:drift-limit}. In fact, the well-known Donsker invariance principle \cite{Fel} basically says that the limit of a classical random walk is described suitably well by Brownian motion, \ie by scaling length increments with the square-root of time increments, supposed that the expectation of every increment is $0$. The following kind of central limit theorem helps us to treat these dissipation-fluctuation terms in the present noncommutative setting, and will thus become a building stone in our construction; it has first been stated in \cite{we} but can also be found in the textbook \cite[Th.4.4.2]{gr}. The necessary notation and concepts in Lie groups and stochastic processes on Lie groups are lined out in Appendix \ref{sec:app}, and we suggest the reader to go through it before continuing here.

\begin{theorem}\label{theowehn}
Let $G$ be an $N$-dimensional Lie group, with $\unit$-chart $(U,x)$, Lie algebra basis $(X_k)_{1\le k\le N}$ and coordinate mappings $x_k:U\ra \R $ extended to functions in $\Cci_c(G)$ and hence to the one-point compactification $G_c$.
Let $(\mu_n)_{n\in \N}$ be a family of probability measures on $G$ converging to $\delta _{\unit}$. Suppose there are numbers $a_k,a_{kl} \in \R$ such that $(a_{kl})_{k,l=1\ldots N}$ is positive semi-definite and, for all $k,l=1,...,N$ and $n\ra\infty$:
\begin{itemize}
\item[\textup{(i)}] $ \int_{G} x_k(g) d\mu _n(g) = a_k/n + o(1/n),$
\item[\textup{(ii)}] $\int_{G} x_k(g)x_l(g) \rmd\mu _n(g) = a_{kl}/n + o(1/n),$
\item[\textup{(iii)}] $\mu _n(\tilde{U}^c)=o(1/n)$ for every $\unit$-neighbourhood
  $\tilde{U}\subset G_c$.
\end{itemize}
Then the sequence $((\mu _n) ^{*n})_{n \in \N}$ converges *-weakly to a measure
$\nu_1$ on $G_c$ which belongs to the convolution semigroup $(\nu_t)_{t \in
  \R _+}$ whose corresponding operator semigroup $(T_t)_{t\in\R_+}$ on $C(G_c)$ has infinitesimal generator 
\[L:= \frac{\rmd}{\rmd t} T_t\restriction_{t=0} 
=\sum_{k=1}^N a_k D_{X_k} + \sum_{k,l=1}^N a_{kl}D_{X_k}
D_{X_l}\]
with $\dom(L)= C^2(G_c)$.
\end{theorem}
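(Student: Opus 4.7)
The plan is to translate the statement into a convergence result for operator semigroups on $C(G_c)$ via the correspondence between probability measures and Markov operators, and then to apply a Trotter--Kato approximation theorem. To each probability measure $\mu$ on $G_c$ I would associate the contraction $T_\mu$ on $C(G_c)$ defined by $(T_\mu f)(g):=\int_{G_c} f(gh)\,\rmd\mu(h)$. Convolution of measures intertwines with composition of these operators, so that $T_{\mu_n^{*n}}=T_{\mu_n}^n$, and *-weak convergence of probability measures is equivalent to pointwise convergence of $T_\mu f$ for every $f\in C(G_c)$. In parallel, the sought convolution semigroup $(\nu_t)_{t\in\R_+}$ corresponds bijectively --- via Hunt's theorem for Lie groups --- to a Feller semigroup $(T_t)$ whose generator restricts to the differential operator $L$ on $C^2(G_c)$; it therefore suffices to prove $T_{\mu_n}^n \ra T_1$ strongly.

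The heart of the argument is a Taylor expansion of $f\in C^2(G_c)$ around $\unit$ in the chart $(U,x)$. I would first pick a compactly supported cut-off $\chi$ inside $U$ equal to $1$ on a small fixed $\unit$-neighbourhood and, inside that neighbourhood, write
\[
f(gh)-f(g)=\sum_{k} x_k(h)(D_{X_k}f)(g)+\tfrac{1}{2}\sum_{k,l} x_k(h)x_l(h)(D_{X_k}D_{X_l}f)(g)+R(g,h),
\]
with a remainder obeying $|R(g,h)|\le\omega(\|x(h)\|)\,\|x(h)\|^2$ uniformly in $g$, where $\omega(r)\ra 0$ as $r\ra 0$ by uniform continuity of the second derivatives of $f$ on the compactum $G_c$. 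Outside the cut-off, the left-hand side is bounded by $2\|f\|_\infty$ and the coefficient functions $x_k$, $x_k x_l$ are globally bounded by their $\Cci_c(G)$ extensions.

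Next I would integrate against $\mu_n$ and multiply by $n$. Assumptions (i) and (ii) turn the linear and quadratic terms into $(Lf)(g)+o(1)$ uniformly in $g\in G_c$, the factor $\tfrac12$ being absorbed into the normalisation convention of the theorem. The tail hypothesis (iii) handles the remainders: on a $\unit$-neighbourhood $\tilde U$ of radius $\varepsilon$ the integrated Taylor error is bounded by $\omega(\varepsilon)\cdot n\sum_k \int x_k^2\,\rmd\mu_n = \omega(\varepsilon)\cdot O(1)$, which tends to $0$ as $\varepsilon\ra 0$, while on $\tilde U^c$ everything is dominated by a constant depending on $f$ times $n\,\mu_n(\tilde U^c)=o(1)$. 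This yields $n(T_{\mu_n}-I)f\ra Lf$ uniformly on the core $C^2(G_c)$ of $L$.

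The last step is to invoke a Trotter--Kato approximation theorem for contraction semigroups (equivalently, Chernoff's product formula): since each $T_{\mu_n}$ is a contraction and the pre-generators $n(T_{\mu_n}-I)$ converge to $L$ on a core, the iterates $T_{\mu_n}^n$ converge strongly to $T_1$ on all of $C(G_c)$. Translating back via the measure-operator correspondence gives the *-weak convergence $\mu_n^{*n}\ra\nu_1$. I expect the main technical obstacle to be the uniform control of the Taylor remainder over the non-compact $G$, together with the right pairing of the cut-off scale $\varepsilon$ against the Lindeberg-type tail bound (iii); once that book-keeping is in place, the rest follows from standard semigroup machinery.
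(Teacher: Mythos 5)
The paper never proves Theorem \ref{theowehn}: it is imported verbatim from Wehn and from Grenander's textbook, so there is no internal proof to match yours against. Your reconstruction follows what is in fact the standard route in those sources: identify measures with the Markov operators $T_\mu$ on $C(G_c)$ so that $\mu_n^{*n}\leftrightarrow T_{\mu_n}^n$, Taylor-expand $f\in C^2(G_c)$ in the $\unit$-chart, use (i)--(ii) for the first- and second-order coefficients and the Lindeberg-type condition (iii) to kill both the Taylor remainder and the contribution outside the chart, obtain $n(T_{\mu_n}-\id)f\ra Lf$ uniformly on a core, and conclude by Chernoff/Trotter--Kato, with Hunt's theory supplying the limiting Gaussian convolution semigroup (degeneracy of $(a_{kl})$ is harmless here). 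That architecture is sound.

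There is, however, one point you may not wave away: the constant in front of the diffusion term. With hypotheses (i)--(ii) exactly as stated, your expansion yields $n(T_{\mu_n}-\id)f\ra\sum_k a_k D_{X_k}f+\tfrac12\sum_{k,l}a_{kl}D_{X_k}D_{X_l}f$, and the $\tfrac12$ is genuinely there: for $G=\R$ and $\mu_n=\tfrac12\big(\delta_{1/\sqrt{n}}+\delta_{-1/\sqrt{n}}\big)$ one has $a_1=0$, $a_{11}=1$, while $\mu_n^{*n}$ converges to the standard Gaussian, whose semigroup has generator $\tfrac12\,\rmd^2/\rmd x^2$, not $\rmd^2/\rmd x^2$. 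So ``absorbing the factor into the normalisation convention'' is not available to you: either your conclusion carries the $\tfrac12$ (and then it differs from the displayed statement, which must be read in a convention where $(a_{kl})$ is half the limiting second moments), or the constant has to be reconciled explicitly --- this matters downstream, since the formula for $\hat L$ in Section \ref{sec:three} contains no $\tfrac12$. A second, smaller bookkeeping issue: with your choice $T_\mu f(g)=\int f(gh)\,\rmd\mu(h)$, the derivatives produced by expanding $h\mapsto f(gh)$ are $\frac{\rmd}{\rmd t}f(g\,\rme^{tX_k})\restriction_{t=0}$, i.e.\ the left-invariant ones, whereas the paper defines $D_Yf(g)=\frac{\rmd}{\rmd t}f(\rme^{tY}g)\restriction_{t=0}$ and Section \ref{sec:three} works with left increments, $T_tf(g)=\int f(hg)\,\rmd\nu_t(h)$; the two coincide at $g=\unit$ but not as operators, so you should either switch to $\int f(hg)\,\rmd\mu(h)$ or redefine $D_Y$ consistently. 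The remaining technical steps you flag (uniformity of the remainder from $f\in C^2(G_c)$, pairing the cut-off scale with (iii), the core property needed for the Trotter--Kato step) are exactly the right ones.
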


We would like to apply this theorem to our setting, namely where $G=\GL(A)$ and $\lg=\gl(A)$ regarded as (real!) linear Lie group and algebra, respectively, and subspaces of $B(A)$. 
The plan is as follows: in a first step we shall construct a continuous-time stochastic process in $G$, and in a second step use this to obtain a description of the induced behaviour of the density matrix process $(\rho_t)_{t\in\R_+}$.
In our setting this means we first have to define suitable and physically realistic measures $\mu_n$ to which to apply our limit procedure of Theorem \ref{theowehn}. The drift part should correspond to the original drift part resulting from \eqref{eq:mutau}. Putting
\[
\bar{\L} := \frac{1}{|J|}\sum_{j\in J}\Ad(v_j)\circ \L\circ\Ad(v_j^*)
\]
and
\[
\L_j:= \Ad(v_j)\circ (\L-\bar{\L})\circ\Ad(v_j^*),
\]
let us define the measures
\begin{equation}\label{eq:mu-n}
\mu_n := \frac{1}{|J|}\sum_{j\in J}\delta
_{\exp\big(\frac{\tau}{n^{1/2}} \L_j + \frac{\tau}{n} (\bar{\L} - \frac{\tau}{2} \L_j^2)\big)}, \quad n\in\N,
\end{equation}
which conceptually imitate a diffusion part for the variation around the mean $\bar{\L}$ and a drift part for the mean movement. Apart from being mathematically clear and plausible from the classical Donsker invariance principle, the meaningfulness of this limit shall moreover be confirmed by numerical analysis carried out partially in the final section and mainly in \cite{ABH}. 

Let us drop a quick side remark: as a rough first approximation for $\mu_n$ we might also study the purely drift-like
\begin{equation}\label{eq:mu-n2}
\mu_n^{(\drift)} := \frac{1}{|J|}\sum_{j\in J}\delta
_{\exp(\frac{\tau}{n} \Ad(v_j)\circ \L \circ \Ad(v_j^*))},
\end{equation}
similar to \eqref{eq:mutau} but with scaling variable $\tau/n$ instead of $\tau$ as we now would like to keep $\tau$ fixed. In analogy to the law of large numbers this would lead to even nicer expressions and CPT dynamics but less faithful modelling; sometimes we will consider it briefly for comparison reasons, \cf Remark \ref{rem:drift-limit}. It can be shown that all other types of scaling (\ie others than $1/n$ or $1/\sqrt{n}$) essential lead to either trivial or singular not well-defined expressions. We shall therefore stick to $\mu_n$ henceforth if not explicitly mentioned otherwise. Moreover, it is clear that $\Ad(v_j)\circ\bar{\L}\circ\Ad(v_j^*) = \bar{\L}$ (as $V$ is a group) and hence $\sum_{j\in J} \L_j = 0$. We have $\bar{\L}=0$ iff $V$ satisfies the decoupling condition for $(\H,\L)$.


Using now the defining property of the coordinate maps in the limit $n\ra\infty$, (i) is obtained with a series expansion of $\exp\big(\frac{\tau}{n^{1/2}} \L_j + \frac{\tau}{n} (\bar{\L} - \frac{\tau}{2} \L_j^2)\big)$, namely:
\begin{align*}
a_k=& \lim_{n\ra\infty} \frac{n}{\tau} \int_{G_c} x_k(g) \rmd \mu_n(g) \\
=& \lim_{n\ra\infty} \frac{n}{\tau|J|} \sum_{j\in J} x_k\Big(\exp\big(\frac{\tau}{n^{1/2}} \L_j + \frac{\tau}{n} (\bar{\L} - \frac{\tau}{2} \L_j^2)\big)\Big)\\
=& \lim_{n\ra\infty} \frac{n}{\tau |J|} \sum_{j\in J} \Big( 
\frac{\tau}{n^{1/2}}\langle \L_j, X_k\rangle_\lg + \frac{\tau}{n}\langle \bar{\L} - \frac{\tau}{2}\L_j^2, X_k \rangle_\lg + \frac{\tau^2}{2 n}\langle \L_j^2, X_k\rangle_\lg + O\big(\frac{1}{n^{3/2}}\big)\Big)\\
=& \langle \bar{\L}, X_k \rangle_\lg.
\end{align*}
Analogously, for (ii) we have
\begin{align*}
a_{kl}=& \lim_{n\ra\infty} \frac{n}{\tau} \int_{G_c} x_k(g)x_l(g) \rmd \mu_n(g) \\
=& \lim_{n\ra\infty} \frac{n}{\tau |J|}\sum_{j\in J} 
x_k\Big(\exp\big(\frac{\tau}{n^{1/2}} \L_j + \frac{\tau}{n} (\bar{\L} - \frac{\tau}{2} \L_j^2)\big)\Big)x_l\Big(\exp\big(\frac{\tau}{n^{1/2}} \L_j + \frac{\tau}{n} (\bar{\L} - \frac{\tau}{2} \L_j^2)\big)\Big)\\
=& \lim_{n\ra\infty} \frac{n}{\tau |J|}\sum_{j\in J}\Big(
\frac{\tau^2}{n}\langle \L_j, X_k\rangle_\lg\langle \L_j, X_l\rangle_\lg + O\big(\frac{1}{n^{3/2}}\big)\Big)\\
=&  \frac{\tau}{|J|}\sum_{j\in J} \langle \L_j, X_k\rangle_\lg\langle \L_j, X_l\rangle_\lg,
\end{align*}
for every $k,l=1,\ldots ,N$.
Finally, it is easy to see that condition (iii) is satisfied as $\mu_n$ has discrete support in $|J|$ points only which converge to $0$ as $n\ra\infty$. Thus we get
\begin{equation}\label{eq:CLTresult}
L= \sum_{k=1}^N a_k D_{X_k} + \sum_{k,l=1}^N a_{kl} D_{X_k}D_{X_l} = D_{\bar{\L}} + \frac{\tau}{|J|}\sum_{j\in J} D_{\L_j}^2
\end{equation}
for the generator of the limit convolution semigroup $(\nu_t)_{t\in\R_+}$ on $G_c$, which can be interpreted as a combination of drift and diffusion on $G_c$. This has been the first big step in our construction, namely the construction of the convolution semigroup of measures $(\nu_t)_{t\in\R_+}$ on $G$; it implicitly describes a stochastic process $(\alpha'_t)_{t\in\R_+}$ on $G$ (according to Theorem \ref{theowehn}) with $\alpha_0'=\id_A$. 

Our second step shall be to calculate the time evolution of the density matrix and related physically significant quantities out of the stochastic process $(\alpha_t')_{t\in\R_+}$. This is slightly involved, but can be done using some tools which we are now going to derive.

A general fact is that, for every $f\in C(G_c)$, we have
\begin{equation}\label{eq:CLTexp}
\E [ f\circ\alpha'_t]
= \int_{G_c} f(g) \rmd \nu_t (g) 
= T_t f (\unit).
\end{equation}
We define the subsemigroup $G^{[1]}:=\{g\in G: \|g\|\le 1\}\subset G\subset G_c$.
Since $\alpha^{(\tau)}_t$ are contractions, the measures $\mu_n$ must all be supported in $G^{[1]}$. This implies that the convolutions $\mu_n^{*m}$ of those measures are supported in $G^{[1]}$ (\cf Appendix \ref{sec:app} for a proof). For given $t>0$, choosing a sequence $(m_n)_{n\in\N}$ such that $m_n/n\ra t$, one can check that $\mu_n^{*m_n}\ra \nu_t$. Hence the limit semigroup $(\nu_t)_{t\in\R_+}$ is supported in $G^{[1]}$, meaning the Gaussian process $(\alpha'_t)_{t\in\R_+}$ stays almost surely in  $G^{[1]}$. Then it follows that $T_t$ preserves the closed subspace $C_{0,b}(G^{[1]c})\subset C(G_c)$ of bounded continuous functions on the complement $G^{[1]c}$  of $G^{[1]}$ vanishing at the boundary $\{g\in G: \|g\|=1\}$:
\[
T_t f(g) = \int_{G^{[1]}} f(hg) \rmd \nu_t (h) = 0,
\]
for every $g\in G^{[1]}$, as $f(hg)=0$ for $h,g\in G^{[1]}$, \ie $T_tf$ has support in $G^{[1]c}$ and is bounded by $\|f\|_\infty$.
The corresponding quotient Banach space $C(G_c)/C_{0,b}(G^{[1]c})$ can be identified with $C_b(G^{[1]})$: namely, $f\in C(G_c)$ induces a function $f\restriction_{G^{[1]}}\in C_b(G^{[1]})$ and v.v., two extensions $f_1,f_2$ of a function $f\in C_b(G^{[1]})$ to $G_c$ lead to $f_1-f_2\in C_{0,b}(G^{[1]c})$, thus a unique element in $C(G_c)/C_{0,b}(G^{[1]c})$. Write $q$ for the corresponding quotient map and $f^{[1]}:= q(f)$, for every $f\in C(G_c)$, so that $f^{[1]}(g)= f(g)$ if $g\in G^{[1]}$. Then we get the quotient semigroup  $(T_t^{[1]})_{t\in\R_+}$ as in Appendix \ref{sec:app}, with infinitesimal generator $K= q(L q^{-1}( \cdot))$ and $\dom(K) = q(\dom(L)) \simeq C_b^2(G^{[1]})$, the twice differentiable functions on $G^{[1]}$ which and whose first and second order derivatives are all bounded.

In order to achieve a description of the time evolution $(\rho_t)_{t\in\R_+}$ of the density matrix, the idea is to study every entry of $\rho_t$ in a certain orthonormal basis. To this end, let $(e_k)_{k=1...d}$ be an arbitrary fixed orthonormal basis of $A$. We consider, for every $k,l$, the function
\[
f_{kl}: g \in G \mapsto \langle e_l, g (e_k)\rangle,
\] 
which is $\nu_t$-integrable (because bounded by $1$ on the support) and which lies in $\Cci(G)$ but not in $C_c(G)$. Write $f_{kl}^{[\infty]}$ for an arbitrary but fixed function in $\Cci_c(G)$ (and hence $\Cci(G_c)$) coinciding with $f_{kl}$ on $G^{[1]}$, which can always be achieved, \eg by multiplying with a smoothed indicator function on $G^{[1]}$ (easy exercise); moreover, following the notation of the preceding paragraph we write $f_{kl}^{[\infty,1]}:= q(f_{kl}^{[\infty]})$. Then
\[
\E[f_{kl}(\alpha'_t(\cdot)g)]=\E[f_{kl}^{[\infty]}(\alpha'_t(\cdot)g)]
= T_t f_{kl}^{[\infty]} (g) = T_t^{[1]} f_{kl}^{[\infty,1]} (g), \quad t\in\R_+, g\in G^{[1]}.
\]
Noticing furthermore that $f_{kl}^{[\infty,1]}\in C_b^2(G^{[1]})$, we have, for every $g\in G^{[1]}$,
\begin{equation}\label{eq:LL}
\begin{aligned}
K f_{kl}^{[\infty,1]}(g) =& q(L f_{kl}^{[\infty]})(g) 
= L f_{kl}^{[\infty]}(g)\\
=& \frac{1}{|J|}\sum_{j\in J} \Big( D_{\bar{\L}}+ \tau D_{\L_j}^2\Big) f_{kl}^{[\infty]}(g)\\
=& \frac{\rmd}{\rmd t}\langle e_l,\rme^{t\bar{\L}}g (e_k) \rangle \restriction_{t=0}
+\frac{\tau}{|J|}\sum_{j\in J} \frac{\rmd^2}{\rmd t^2}\langle e_l,\rme^{t\L_j}g (e_k) \rangle \restriction_{t=s=0} \\
=& \langle e_l , \Big(\bar{\L}
+\frac{\tau}{|J|}\sum_{j\in J} \L_j^2 \Big)(g e_k) \rangle \\
=& \langle e_l, \hat{L}(g e_k)\rangle 
\end{aligned}
\end{equation}
with
\[
\hat{L} := \bar{\L} + \frac{\tau}{|J|} \sum_{j\in J} \L_j^2 \in B(A).
\]
Analogously $K^n f_{kl}^{[\infty,1]}(g) =\langle e_l, \hat{L}^n(g e_k)\rangle$, which is bounded by $\|\hat{L}\|^n$ uniformly in $g\in G^{[1]}$. Therefore, \[
z\in \C \mapsto  \sum_{n=0}^\infty \frac{z^n}{n!} K^n f_{kl}^{[\infty,1]}
= \sum_{n=0}^\infty \frac{z^n}{n!} \langle e_l, \hat{L}^n (\cdot e_k)\rangle = \langle e_l, \rme^{z \hat{L}} (\cdot e_k)\rangle \in \Cci(K)
\]
converges and is an analytic continuation of $t\mapsto T_t^{[1]}f_{kl}^{[\infty,1]}$, so $f_{kl}^{[\infty,1]}\in\Cci(K)$ is an entire analytic vector for $T_t^{[1]}$ (\cf Appendix \ref{sec:app}).

Recalling \eqref{eq:CLTexp} and noticing that $\frac12\unit\in G^{[1]}$ and that $f_{kl}$ is linear in its argument, this enables us to compute the expectation value
\begin{align*}
\E [ \langle e_l,  \alpha'_t(e_k) \rangle]
=& \E [f_{kl}(\alpha'_t(\cdot))] 
= 2 \E[f_{kl}(\frac12\alpha'_t(\cdot))] 
= 2 \E[f_{kl}^{[\infty]}(\frac12\alpha'_t(\cdot))] \\
=& 2 T_t f_{kl}^{[\infty]}\Big(\frac12 \unit\Big)
= 2 T_t^{[1]} f_{kl}^{[\infty,1]} \Big(\frac12 \unit\Big)
= \langle e_l, \rme^{t \hat{L}} (e_k) \rangle,
\end{align*}
for every $k,l$, so $\E[\alpha'_t(e_k)]= \rme^{t\hat{L}}(e_k)$. Since this holds for every basis vector $e_k$, it holds for all elements in $A$. Applying it to the $A$-valued ``density matrix stochastic process" $(\rho_t:=\alpha'_t(\rho_0))_{t\in\R_+}$, we find
\[
\E [\rho_t] = \rme^{t \hat{L}}(\rho_0),
\]
concluding our second step, too.

We summarize this all in

\begin{theorem}\label{th:cont-limit}
The continuous-time limit $(\alpha'_t)_{t\in\R_+}$ of the above random walk determined by a quantum system $(\H,\L)$ and random dynamical decoupling with decoupling set $V=(v_j)_{j\in J}$ and \eqref{eq:mu-n} leads to a contraction semigroup with generator \eqref{eq:CLTresult}. The density matrix $(\rho_t)_{t\in\R_+}$ is then a stochastic process in $A$ with expectation
\[
\E[\rho_t] = \rme^{t \hat{L}}(\rho_0), \quad \forall t\ge 0,
\]
where
\[
\hat{L} = \bar{\L} + \frac{\tau}{|J|} \sum_{j\in J} \L_j^2.
\]
\end{theorem}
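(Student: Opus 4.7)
The theorem essentially packages together the long construction carried out in the preceding pages, so my plan is to organise those calculations into three distinct stages rather than to introduce new machinery. First, I would verify the hypotheses of Theorem \ref{theowehn} for the measures $\mu_n$ of \eqref{eq:mu-n} on $G=\GL(A)$. Since $\mu_n$ is supported on finitely many points that all tend to $\unit$ as $n\to\infty$, condition (iii) is immediate and $\mu_n\to\delta_\unit$ weakly. For (i) and (ii) I would Taylor-expand $\exp\bigl(\tfrac{\tau}{\sqrt n}\L_j+\tfrac{\tau}{n}(\bar{\L}-\tfrac{\tau}{2}\L_j^2)\bigr)$ in a coordinate chart around $\unit$ to second order, using $\sum_j\L_j=0$ to kill the leading $O(n^{-1/2})$ terms and the identity $x_k(\exp(Y))=\langle Y,X_k\rangle_\lg+O(\|Y\|^2)$ to read off $a_k=\langle\bar{\L},X_k\rangle_\lg$ and $a_{kl}=\tfrac{\tau}{|J|}\sum_j\langle\L_j,X_k\rangle_\lg\langle\L_j,X_l\rangle_\lg$, which reassembles into the operator $L$ of \eqref{eq:CLTresult}.

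Second, I would argue that the resulting Feller semigroup $(T_t)_{t\in\R_+}$ in fact restricts to a contraction semigroup on $C_b(G^{[1]})$, which is what the theorem claims. The key input here is that every $\alpha_\tau^{(\tau)}$ in the prelimit random walk is a CPT map, hence a contraction on $A$, so the measures $\mu_n$ are supported in $G^{[1]}$; convolution preserves this (products of contractions are contractions), and passing to the weak limit $\mu_n^{*m_n}\to\nu_t$ along $m_n/n\to t$ shows that $\nu_t$ is supported in $G^{[1]}$ too. Combined with the quotient construction sketched in the text this yields the semigroup $(T_t^{[1]})_{t\in\R_+}$ with generator $K$.

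Third, I would compute $\E[\rho_t]$ by testing against a fixed orthonormal basis $(e_k)_{k=1,\dots,d}$ of $A$ through the matrix-coefficient functions $f_{kl}(g)=\langle e_l,g(e_k)\rangle$. Applying $K$ to a smooth compactly supported extension $f_{kl}^{[\infty,1]}$ gives $Kf_{kl}^{[\infty,1]}(g)=\langle e_l,\hat{L}(ge_k)\rangle$, and an induction yields $K^n f_{kl}^{[\infty,1]}(g)=\langle e_l,\hat{L}^n(ge_k)\rangle$, uniformly bounded in $g\in G^{[1]}$ by $\|\hat{L}\|^n$. This uniform bound is precisely what identifies $f_{kl}^{[\infty,1]}$ as an entire analytic vector and legitimises the term-by-term exponentiation $T_t^{[1]}f_{kl}^{[\infty,1]}(g)=\langle e_l,\rme^{t\hat{L}}(ge_k)\rangle$. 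Evaluating at $g=\tfrac12\unit$ (which lies in $G^{[1]}$) and using linearity of $f_{kl}$ in its argument, together with \eqref{eq:CLTexp}, then produces $\E[\langle e_l,\alpha'_t(e_k)\rangle]=\langle e_l,\rme^{t\hat{L}}(e_k)\rangle$. Since $(e_k)$ spans $A$, we conclude $\E[\alpha'_t(x)]=\rme^{t\hat{L}}(x)$ for every $x\in A$, and the density-matrix formula follows by setting $x=\rho_0$.

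The bookkeeping at each stage is routine; the one subtlety I would flag as the main obstacle is the passage between the ``native'' semigroup $(T_t)_{t\in\R_+}$ on $C(G_c)$ provided by Theorem \ref{theowehn} and the quotient semigroup $(T_t^{[1]})_{t\in\R_+}$ needed to apply the analytic-vector argument, because $f_{kl}$ itself is not in $C_c(G)$ and the chosen compactly supported extension is not canonical. Care must be taken to check that $T_t^{[1]}f_{kl}^{[\infty,1]}$ is independent of the choice of extension (which is exactly the content of $C(G_c)/C_{0,b}(G^{[1]c})\simeq C_b(G^{[1]})$) and that the quotient generator $K$ still annihilates all $C_{0,b}(G^{[1]c})$-ambiguity at each order, so that the expansion $\sum_n \tfrac{t^n}{n!}K^n f_{kl}^{[\infty,1]}$ genuinely represents $T_t^{[1]}f_{kl}^{[\infty,1]}$.
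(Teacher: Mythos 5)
Your proposal follows essentially the same route as the paper: you verify the hypotheses of Theorem \ref{theowehn} for the measures \eqref{eq:mu-n} to obtain the generator \eqref{eq:CLTresult}, restrict to $G^{[1]}$ via the same support-of-$\nu_t$ and quotient argument, and recover $\E[\rho_t]=\rme^{t\hat{L}}(\rho_0)$ through the matrix-coefficient functions $f_{kl}$, the computation $K^n f_{kl}^{[\infty,1]}(g)=\langle e_l,\hat{L}^n(g e_k)\rangle$, the analytic-vector argument and evaluation at $\tfrac12\unit$, exactly as in Section \ref{sec:three}. The extension-independence subtlety you flag is precisely what the paper's identification $C(G_c)/C_{0,b}(G^{[1]c})\simeq C_b(G^{[1]})$ takes care of, so no changes are needed.
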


\begin{remark}\label{rem:non-const-evo}
If the intrinsic time evolution is not constant (but still continuously differentiable), then the continuous-time limit can be carried out in the same way, resulting in a time-dependent generator 
\[
\hat{L}(t) = \bar{\L}(t) + \frac{\tau}{|J|} \sum_{j\in J} \Ad(v_j)\circ (\L(t)-\bar{\L}(t))^2 \circ \Ad(v_j^*), \quad \forall t\in\R_+,
\]
and just a time-ordered integral \cite{LB}
\begin{equation}\label{eq:time-order}
\E [\rho_t] = \mathcal{T} \rme^{\int_0^t \hat{L}(t')\rmd t'}(\rho_0)
= \sum_{n=0}^\infty \int_0^{t}\int_0^{t'_{n}} \ldots \int_0^{t'_2}
\hat{L}(t'_n)\ldots \hat{L}(t'_1) \rmd t'_1 \ldots \rmd t'_n
\end{equation}
instead of the semigroup. However, this analytic expression will be a good approximation of the original random walk usually only if $\tau$ is sufficiently small such that
\[
\tau \Big\| \frac{\rmd}{\rmd t'} \L(t)\Big\| \ll \|\L(t')\|, \quad \forall t'\in [0,t]. 
\]
For simplicity we shall only deal with the time-independent version here below.
\end{remark}

\begin{remark}\label{rem:drift-limit}
Let us write $\hat{L}^{(\drift)}$ for the  generator and $(\nu_t^{(\drift)})_{t\in\R_+}$ for the convolution semigroup of measures corresponding to the drift-like continuous-time limit of the random walk with $\mu_n^{(\drift)}$ as in \eqref{eq:mu-n2} instead of \eqref{eq:mu-n}, and accordingly $\E^{(\drift)}$ and $\V^{(\drift)}$ for expectation and variances with respect to $(\nu_t^{(\drift)})_{t\in\R_+}$. Then going through the construction of Theorem \ref{th:cont-limit}, we see that the generator of $(T_t^{(\drift)})_{t\in\R_+}$ becomes $L^{(\drift)}=D_{\bar{\L}}$. Hence $\hat{L}^{(\drift)}= \bar{\L}$, which vanishes iff the decoupling condition is fulfilled iff the original time evolution $\alpha$ was unitary, according to Theorem \ref{lem:deccond-unitary}. In this case $T_t^{(\drift)}=\id$, for all $t\in\R_+$, hence $\E^{(\drift)}[\rho_t]=\rho_0$.
\end{remark}

\begin{example}\label{ex:2}
(1) We continue our Example \ref{ex:1} from the preceding section, the $N$-qubit system, with $V$ the group of tensor products of $N$ Pauli matrices. Suppose our time evolution is unitary, so $\L= \rmi \Ad(H)$ with $H$ the system Hamiltonian. Then we find $\hat{L}^{(\drift)}=\bar{\L}=0$, so $\L_j= \rmi [v_j H v_j,\cdot]$ and
\[
\hat{L} = -\frac{\tau}{|J|} \sum_{j\in J} [v_jH v_j,[v_j H v_j, \cdot]].
\]
Now a variety of special cases may be investigated. If \eg $H$ acts only on the first qubit, \ie it can be written as $H=H_1 \otimes \unit^{\otimes (n-1)}$, then so does $\hat{L}$. If moreover $\rho_0$ splits as a product state on the tensor factors, then so does $\E[\rho_t]$, for all $t>0$, with only the first tensor factor changing over time.

(2) Another example, which shall turn up in Figure \ref{fig1} and which is treated in detail in \cite{ABH} is the amplitude-damping model. In this setting $\H$ is the one-qubit Hilbert space $\C^2$, $A= \Mat_2(\C)$ and 
\[
\L(x)= -\gamma \big( 2x - \rmi \sigma_3 x - \rmi x \sigma_3 - \sigma_1 x \sigma_1 
-\sigma_2 x \sigma_2 - \rmi \sigma_1 x \sigma_2 - \rmi \sigma_2 x \sigma_1 \big), \quad x\in A,
\]
with a certain coefficient $\gamma \in \R_+$. The Pauli matrices constitute the decoupling set $V=\{v_0=\unit, v_j=\sigma_j: j=1,2, 3\}$. In order to compute the generator $\hat{L}= \bar{\L} + \frac{\tau}{4} \sum_{j=0}^3 \L_j^2$, one checks:
\[
\bar{\L}(x) = - \gamma \big( 2x - \sigma_1 x \sigma_1 - \sigma_2 x \sigma_2 \big)
\]
and
\[
\L_0(x)= -\L_1(x)= -\L_2(x) = \L_3(x) 
= -\rmi\gamma \big( \sigma_3 x + x \sigma_3 + \sigma_1 x\sigma_2 - \sigma_2 x \sigma_1 \big), \quad x\in A.
\]
A computer can now easily calculate $\E[\rho_t]= \rme^{t\hat{L}}(\rho_0)$, for any given $t>0$ and initial density matrix $\rho_0\in A$. The result should be a good approximation for the actual random walk if $\tau\ll 1/\|\L\|$.
\end{example}

Important related quantities like gate fidelity shall be computed in the following section. Before concluding the present section let us derive here a tool that shall allow us to compute higher moments (including variance) of random variables, beyond the present linear ones (expectation value).

\begin{proposition}\label{prop:higher-mom}
In the setting of Theorem \ref{th:cont-limit}, for all $x_1,\ldots,x_n, y_1,\ldots y_n \in A $, let
\[
f_{x_1\ldots x_n,y_1\ldots y_n}(g) := \langle y_1, g(x_1)\rangle \cdots\langle y_n, g(x_n)\rangle = (f_{x_1,y_1}\cdots f_{x_n,y_n})(g), \quad g\in G,
\]
and define the linear operator $\hat{L}^{(n)}$ on $A ^{\t n}$ by
\begin{equation*}
\begin{split}
\hat{L}^{(n)} (x_1 \t ... \t x_n) :=& 
\sum_{l=1}^n x_1\t ...\t \bar{\L}(x_l) \t ...\t x_n\\ 
&+ \frac{\tau}{|J|} \sum_{j\in J} \Big(\sum_{l=1}^n x_1\t ...\t \L_j^2(x_l) \t ...\t x_n\\  
& \quad + 2 \sum_{k=1,l>k}^{n} x_1 \t ...\t \L_j(x_k)\t ...\t
 \L_j(x_l) \t ...\t x_n \Big)
\end{split}
\end{equation*}
and linear extension. Then 
\[ 
\E [f_{x_1 \ldots x_n,y_1 \ldots y_n}\circ\alpha_t'] 
= \langle(y_1\t ...\t y_n), \rme^{t\hat{L}^{(n)}} (x_1\t ...\t x_n)\rangle.
\] 
\end{proposition}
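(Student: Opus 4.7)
The strategy is to mimic the construction just carried out for the linear case $n=1$, replacing the functional $f_{x,y}(g)=\langle y,g(x)\rangle$ by its $n$-fold product. Writing $\tilde f:=f_{x_1\ldots x_n,y_1\ldots y_n}=\prod_{i=1}^n F_i$ with $F_i(g):=\langle y_i,g(x_i)\rangle$, the function $\tilde f$ is bounded on $G^{[1]}$ (by $\prod_i\|x_i\|\|y_i\|$) but not on all of $G$, so as for $f_{kl}$ above one fixes an extension $\tilde f^{[\infty]}\in\Cci_c(G)$ coinciding with $\tilde f$ on $G^{[1]}$ and passes to the quotient $\tilde f^{[\infty,1]}=q(\tilde f^{[\infty]})\in C_b^2(G^{[1]})$. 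Since $\tilde f$ is homogeneous of degree $n$ in $g$, the multilinearity/scaling trick used previously gives
\[
\E[\tilde f\circ\alpha'_t]=2^n\,\E[\tilde f^{[\infty]}(\tfrac12\alpha'_t)]=2^n\,T_t^{[1]}\tilde f^{[\infty,1]}(\tfrac12\unit),
\]
reducing everything to an explicit computation of this semigroup value.

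The core step is to show $L\tilde f^{[\infty]}(g)=\langle y_1\otimes\cdots\otimes y_n,\,\hat L^{(n)}(g(x_1)\otimes\cdots\otimes g(x_n))\rangle$ on $G^{[1]}$ by applying Leibniz to $L=D_{\bar\L}+\frac{\tau}{|J|}\sum_j D_{\L_j}^2$. The first-order rule gives $D_{\bar\L}\prod_i F_i=\sum_l(\prod_{i\neq l}F_i)\,D_{\bar\L}F_l$, while iterating it produces
\[
D_{\L_j}^2\prod_i F_i=\sum_l\bigl(\textstyle\prod_{i\neq l}F_i\bigr)D_{\L_j}^2F_l+2\sum_{k<l}\bigl(\textstyle\prod_{i\neq k,l}F_i\bigr)D_{\L_j}F_k\cdot D_{\L_j}F_l,
\]
the factor $2$ arising because $\sum_l\sum_{m\neq l}$ counts each unordered pair twice. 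Combined with $D_XF_l(g)=\langle y_l,X(g(x_l))\rangle$ and $D_X^2F_l(g)=\langle y_l,X^2(g(x_l))\rangle$, this reproduces term by term the three blocks in the definition of $\hat L^{(n)}$ acting on $g(x_1)\otimes\cdots\otimes g(x_n)$. Transferring $\hat L^{(n)\dagger}$ to the $y$-side via the tensor-product inner product then shows that $K$ maps $\tilde f^{[\infty,1]}$ to a function of the same multilinear type, only with $(y_1,\ldots,y_n)$ replaced by the tensor $\hat L^{(n)\dagger}(y_1\otimes\cdots\otimes y_n)$, so the class of such functionals is invariant under $K$.

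Iterating this identification, $K^m\tilde f^{[\infty,1]}$ is again of the same form with the $y$-tensor replaced by $(\hat L^{(n)\dagger})^m(y_1\otimes\cdots\otimes y_n)$, hence uniformly bounded on $G^{[1]}$ by $\|\hat L^{(n)}\|^m\prod_i\|x_i\|\|y_i\|$. As in the treatment of $f_{kl}^{[\infty,1]}$ this exhibits $\tilde f^{[\infty,1]}$ as an entire analytic vector for $T_t^{[1]}$. Each evaluation at $\tfrac12\unit$ produces the overall factor $2^{-n}$ from the $n$ factors $F_i(\tfrac12\unit)=\tfrac12\langle y_i,x_i\rangle$, giving
\[
T_t^{[1]}\tilde f^{[\infty,1]}(\tfrac12\unit)=\sum_{m\ge 0}\frac{t^m}{m!}\cdot 2^{-n}\langle y_1\otimes\cdots\otimes y_n,(\hat L^{(n)})^m(x_1\otimes\cdots\otimes x_n)\rangle,
\]
and multiplication by $2^n$ yields the claimed formula.

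The main obstacle is the combinatorial bookkeeping in the second-order Leibniz expansion: one must verify that the ``diagonal'' contributions $\sum_l\cdots\L_j^2(x_l)\cdots$ and the ``off-diagonal'' contributions $2\sum_{k<l}\cdots\L_j(x_k)\cdots\L_j(x_l)\cdots$ emerge with exactly the coefficients prescribed in the definition of $\hat L^{(n)}$. Once this identification is secured, the analytic-vector argument, the $\tfrac12\unit$-trick, and the translation back from the semigroup value to the expectation all proceed as in the linear case $n=1$.
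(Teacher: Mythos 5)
Your proposal is correct and follows essentially the same route as the paper: truncate to $G^{[1]}$ and pass to the quotient, apply the first- and second-order Leibniz rule to $L=D_{\bar\L}+\frac{\tau}{|J|}\sum_j D_{\L_j}^2$ to identify the diagonal and off-diagonal terms of $\hat L^{(n)}$, iterate to get $K^m$, and conclude via the entire-analytic-vector argument. The only (harmless) cosmetic difference is your degree-$n$ homogeneity trick with evaluation at $\tfrac12\unit$ and the factor $2^n$, where the paper simply evaluates at $\unit$, which lies in $G^{[1]}$ since the limit measures are supported there.
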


\begin{proof}
Following the notation and the truncation and quotient space procedure exactly as in the case of $f_{kl}$, we can define (not uniquely) a smooth function $f_{x_1 \ldots x_n,y_1 \ldots y_n}^{[\infty]} \in \Cci(G)$ from $f_{x_1 \ldots x_n,y_1 \ldots y_n}$ and hence a function $f^{[\infty,1]}_{x_1 \ldots x_n,y_1 \ldots y_n}\in C_b^2(G^{[1]})$, which is analytic for $K$, \ie in $\Cci(K)$; we can and do choose it such that $f^{[\infty,1]}_{x_1 \ldots x_n,y_1\ldots y_n}=f^{[\infty]}_{x_1,y_1} \cdots f^{[\infty]}_{x_n,y_n}$. Exploiting then the product rule for differentiation, we obtain 
\begin{equation*}
\begin{split}
K f^{[\infty,1]}_{x_1 \ldots x_n,y_1\ldots y_n}(\unit)
=& L f^{[\infty]}_{x_1 \ldots x_n,y_1\ldots y_n}(\unit)\\
=& L \big(f^{[\infty]}_{x_1,y_1} \cdots f^{[\infty]}_{x_n,y_n} \big)(\unit)\\
=& D_{\bar{\L}} \big(f^{[\infty]}_{y_1,x_1} \cdots f^{[\infty]}_{y_n,x_n}\big)(\unit)
+ \frac{\tau}{|J|}\sum_{j \in J} D_{\L_j}^2
\big(f^{[\infty]}_{y_1,x_1}\cdots f^{[\infty]}_{y_n,x_n}\big)(\unit)\\
=& \sum_{l=1}^n \langle y_1,x_1\rangle \cdots \langle y_l,\bar{\L}(x_l) \rangle \cdots \langle y_n,x_n\rangle \\
&+ \frac{\tau}{|J|}\sum_{j \in J} \Big( \sum_{l=1}^n \langle y_1,x_1\rangle \cdots \langle y_l,\L_j^2(x_l) \rangle \cdots \langle y_n,x_n\rangle \\  
& \quad + \sum_{k=1,l>k}^{n} 2 \langle y_1, x_1\rangle \cdots \langle y_k, \L_j(x_k)\rangle \cdots \langle y_l, \L_j(x_l) \rangle \cdots \langle y_n, x_n\rangle\Big) \\
=& \langle (y_1\t ...\t y_n), \hat{L}^{(n)} (x_1\t ...\t x_n)\rangle.
\end{split}
\end{equation*}

Analogously, for higher powers we have 
\[
K ^k f^{[\infty,1]}_{x_1,...,x_n,y_1,...y_n}(\unit)
= \langle(y_1\t ...\t y_n), (\hat{L}^{(n)})^k (x_1\t ...\t x_n)\rangle
\] 
whence $\exp(t\hat{L}^{(n)})$ is well-defined on $A ^{\t n}$. Thus we find
\[
\E[f_{x_1,...,x_n,y_1,...y_n}\circ \alpha_t'] 
= T^{[1]}_t f^{[\infty,1]}_{x_1,...,x_n,y_1,...y_n} (\unit) = \langle(y_1\t ...\t y_n), \rme^{t\hat{L}^{(n)}} (x_1\t ...\t x_n)\rangle.
\]
\end{proof}

Analogously, one can prove

\begin{proposition}\label{prop:modulus}
In the setting of Theorem \ref{th:cont-limit}, for all $x,x_i,y,y_i \in A $, let
\begin{flalign*}
\check{L}^{(2)} (x \t y) :=& 
\bar{\L} (x)\t y + x\t \bar{\L}^\dagger (y)\\
&+ \frac{\tau}{|J|} \sum_{j\in J} \Big(\L_j^2(x)\t y + 2 \L_j (x)\t \L_j^\dagger(y) + x\t(\L_j^\dagger)^2(y)\Big)
\end{flalign*}
and
\begin{align*}
\check{L}^{(4)} &(x_1\t y_1 \t x_2\t y_2) :=
\bar{\L} (x_1)\t y_1\t x_2\t y_2 + x_1\t \bar{\L} (y_1)\t x_2\t y_2 \\
&+ x_1\t y_1 \t \bar{\L}^\dagger (x_2)\t y_2 +x_1\t y_1 \t x_2\t \bar{\L}^\dagger (y_2)\\ 
& + \frac{\tau}{|J|} 
\sum_{j\in J} 
\Big(\L_j^2(x_1)\t y_1\t x_2\t y_2 + x_1\t \L_j^2(y_1) \t x_2\t y_2 \\
&\quad + x_1\t y_1\t (\L_j^\dagger)^2(x_2) \t y_2 +  x_1\t y_1\t x_2 \t (\L_j^\dagger)^2(y_2) \\
&\quad + 2 (\L_j (x_1)\t y_1 + x_1\t \L_j(y_1))\t (\L_j^\dagger (x_2)\t y_2 + x_2\t \L_j^\dagger(y_2))\\
&\quad +2 \L_j (x_1)\t \L_j(y_1) \t x_2\t y_2 + 2 x_1\t y_1 \t \L_j^\dagger (x_2)\t \L_j^\dagger (y_2) \Big).
\end{align*}
Then
\[
L (|f^{[\infty]}_{x,y}|^2)(\unit) = \langle(y\t x), \check{L}^{(2)} (x\t y) \rangle
\]
and
\[
L (|f^{[\infty]}_{x,y}|^4)(\unit) = \langle(y\t x\t y\t x), \check{L}^{(4)} (x\t y\t x\t y) \rangle.
\]
\end{proposition}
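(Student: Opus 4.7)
The plan is to compute $L(|f^{[\infty]}_{x,y}|^2)(\unit)$ and $L(|f^{[\infty]}_{x,y}|^4)(\unit)$ directly by factorising $|f|^{2n} = f^n \cdot \overline{f}^n$ and applying the product rule to the first-order differential operator $D_{\bar{\L}}$ and the second-order operators $D_{\L_j}^2$ that constitute $L$. This follows the same philosophy as the proof of Proposition \ref{prop:higher-mom}, but must additionally handle the complex-conjugate factor; this is precisely what will produce the daggers on $\bar{\L}$ and $\L_j$ in $\check{L}^{(2)}$ and $\check{L}^{(4)}$.

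The first step is to establish the key identity
\[
\overline{f_{x,y}(g)} = \overline{\langle y, g(x)\rangle} = \langle g(x), y\rangle = \langle x, g^\dagger(y)\rangle, \quad g\in G,\ x,y\in A,
\]
from which, for any $Y\in\gl(A)$ and $k\in\{1,2\}$, one reads off
\[
D_Y^k \,\overline{f_{x,y}}\,(\unit) = \langle x, (Y^\dagger)^k(y)\rangle.
\]
So directional derivatives of the conjugate factor pick up daggers. I would then apply the Leibniz rule to $|f_{x,y}|^2 = f_{x,y}\overline{f_{x,y}}$ for $D_{\bar{\L}}$ (one derivative) and for $D_{\L_j}^2$ (yielding a cross term with the binomial coefficient $2$). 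Evaluating at $\unit$, using $f_{x,y}(\unit) = \langle y,x\rangle$ together with the tensor-product inner product identity $\langle a\t b, c\t d\rangle_{A\t A} = \langle a,c\rangle\langle b,d\rangle$, one sees termwise that the result equals $\langle y\t x,\check{L}^{(2)}(x\t y)\rangle$ as stated.

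For the quartic case, I would expand $D_{\L_j}^2(f_{x,y}^2 \,\overline{f_{x,y}}^2)$ via the full second-order Leibniz rule over the four factors. The expansion breaks into (i) four single-factor second-derivative contributions, producing the $\L_j^2(x_1)\t y_1\t x_2\t y_2$, $x_1\t\L_j^2(y_1)\t x_2\t y_2$, $x_1\t y_1\t(\L_j^\dagger)^2(x_2)\t y_2$ and $x_1\t y_1\t x_2\t(\L_j^\dagger)^2(y_2)$ terms; and (ii) the six pair cross-terms each with coefficient $2$. Of the latter, the two non-conjugate pairs give $2\L_j(x_1)\t\L_j(y_1)\t x_2\t y_2$, the two conjugate pairs give $2x_1\t y_1\t\L_j^\dagger(x_2)\t\L_j^\dagger(y_2)$, and the four mixed pairs combine into the double-factor expression $2(\L_j(x_1)\t y_1+x_1\t\L_j(y_1))\t(\L_j^\dagger(x_2)\t y_2+x_2\t\L_j^\dagger(y_2))$, matching $\check{L}^{(4)}$ exactly. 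The $D_{\bar{\L}}$ part contributes the four first-order terms by the usual product rule.

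The main obstacle is purely combinatorial bookkeeping in the $|f|^4$ expansion: one has to track which of the six pair cross-terms come from differentiating two $f$'s, two $\bar f$'s, or one of each, and then to group the four mixed cross-terms into the symmetric product stated in the definition of $\check{L}^{(4)}$. No new analytical input is needed beyond what was used for Proposition \ref{prop:higher-mom}: the passage through $f^{[\infty]}_{x,y}\in\Cci_c(G)$ and the quotient semigroup $T_t^{[1]}$ is identical, and the evaluation at $\unit$ avoids any boundary subtleties of $G^{[1]}$.
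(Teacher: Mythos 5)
Your strategy --- factorise $|f^{[\infty]}_{x,y}|^{2n}=f^n\,\overline{f}^{\,n}$, apply the Leibniz rule to $D_{\bar{\L}}$ and $D_{\L_j}^2$, and let the conjugated factors pick up daggers via $\overline{f_{x,y}(g)}=\langle x, g^\dagger(y)\rangle$ --- is exactly the intended argument (the paper proves this ``analogously'' to Proposition \ref{prop:higher-mom}, and the same dagger-producing computation appears explicitly for the mixed terms in the proof of Proposition \ref{prop:EFtVarFt}), and your treatment of the quadratic case is correct.

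In the quartic case, however, your concluding claim of an exact termwise match is the step that does not go through. With your factor ordering $(f,f,\overline{f},\overline{f})$ the dagger pattern of $\check{L}^{(4)}$ (undaggered operators in the first two slots, daggered in the last two) is the right one, but then the first two slots must each be paired as $\langle y,\cdot\,(x)\rangle$ and the last two as $\langle x,\cdot\,(y)\rangle$; your Leibniz expansion therefore yields $\langle y\t y\t x\t x, \check{L}^{(4)}(x\t x\t y\t y)\rangle$, not the printed $\langle y\t x\t y\t x, \check{L}^{(4)}(x\t y\t x\t y)\rangle$. Under the printed interleaved pairing the second slot pairs the bra $x$ with an \emph{undaggered} operator applied to $y$, which is the derivative of $f_{y,x}$ and not of $f_{x,y}$ or $\overline{f_{x,y}}$: for instance the printed term $x_1\t\bar{\L}(y_1)\t x_2\t y_2$ contributes $\langle x,\bar{\L}(y)\rangle\langle y,x\rangle^2\langle x,y\rangle$, whereas your expansion produces $\langle y,\bar{\L}(x)\rangle\langle y,x\rangle\langle x,y\rangle^2$; these scalars differ whenever $\bar{\L}$ (or $\L_j$) has a nonvanishing anti-selfadjoint part, so the two sides are genuinely different in general. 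So an honest execution of your computation proves the identity with the tensor arguments regrouped as $\langle y\t y\t x\t x,\check{L}^{(4)}(x\t x\t y\t y)\rangle$ (equivalently, keeping the printed ordering, with the daggers moved to the second and fourth slots); you need either to state and prove that version or to reconcile the slot/dagger bookkeeping explicitly --- as written, the assertion that the terms ``match $\check{L}^{(4)}$ exactly'' conceals precisely this mismatch.
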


\section{Distribution of the gate fidelity}\label{sec:four}

The most interesting quantity in control theory of a quantum system is its fidelity; as we want to decouple independently of the state, we consider the \emph{gate fidelity} \cite{LB}, which is given by the random variable
\[
F_t := 1- \frac{1}{d} \sum_{k,l=1}^d |\langle e_l, (\id - \alpha'_t)(e_k) \rangle|^2, 
\]
independent of the actual choice of the orthonormal basis $(e_k)_{k=1\ldots d}$ of $A$. Most other versions of fidelity can be treated using similar ideas.
 
We are interested in $\E [F_t]$ and $\V[F_t]$. 

\begin{proposition}\label{prop:EFtVarFt}
In the setting of Theorem \ref{th:cont-limit}, the expectation and variance of the gate fidelity of the quantum system $(\H,\L)$ with decoupling set $V$ are given by
\[
1-\E[F_t] = \frac{1}{d}\sum_{k,l=1}^d
\Big( \delta_{k,l} - \delta_{k,l}  \langle e_l, (\rme^{t\hat{L}}+\rme^{t\hat{L}^\dagger})(e_k) \rangle +\langle e_l\otimes e_k, \rme^{t\check{L}^{(2)}}(e_k\otimes e_l) \rangle \Big)
\]
and
\begin{align*}
\V[F_t]=& \frac{1}{d^2}\sum_{i,j,k,l=1}^d \Big( \delta_{k,l}\delta_{i,j} 
- 2 \delta_{i,j}\delta_{k,l} \langle e_l, (\rme^{t\hat{L}}+\rme^{t\hat{L}^\dagger})(e_k) \rangle \\
&\qquad+2 \delta_{i,j} \langle e_l\t e_k, \rme^{t\check{L}^{(2)}}(e_k\t e_l) \rangle\\ 
&\qquad+  \delta_{i,j}\delta_{k,l} \langle e_i\t e_k, (\rme^{t\check{L}^{(1,1)}}+\rme^{t\check{L}^{(1,2)}} +\rme^{t\check{L}^{(1,2)\dagger}}+\rme^{t\check{L}^{(1,1)\dagger}})
(e_i\t e_k) \rangle \\
&\qquad - 2 \delta_{i,j} \langle e_i\t e_l\t e_k, (\rme^{t\check{L}^{(3,1)}}+\rme^{t\check{L}^{(3,2)}})(e_i\t e_k\t e_l) \rangle \\
&\qquad+ \langle e_j\t e_i\t e_l\t e_k, \rme^{t\check{L}^{(4)}}(e_i\t e_j\t e_k\t e_l) \rangle\Big)\\
&-\frac{1}{d^2}\Big(\sum_{k,l=1}^d \Big(\delta_{k,l} - 2\delta_{k,l}  \langle e_l, (\rme^{t\hat{L}}+\rme^{t\hat{L}^\dagger})(e_k) \rangle + \langle e_l\otimes e_k, \rme^{t\check{L}^{(2)}}(e_k\otimes e_l) \rangle \Big)\Big)^2.
\end{align*}
with
\begin{align*}
\check{L}^{(1,1)}(x\t y) =& 
\bar{\L}(x)\t y+ x\t \bar{\L}(y)\\
&+\frac{\tau}{|J|}\sum_{j\in J} \L_j^2(x)\t y + 2\L_j(x)\t \L_j(y) + x\t \L_j^2(y)\\
\check{L}^{(1,2)}(x\t y) =& 
\bar{\L}^\dagger(x)\t y + x\t \bar{\L}(y)\\
&+\frac{\tau}{|J|}\sum_{j\in J} (\L_j^\dagger)^2(x)\t y + 2\L_j^\dagger(x)\t \L_j(y) + x\t \L_j^2(y)\\
\check{L}^{(3,1)}(x\t y\t z) =& 
\bar{\L}(x)\t y\t z + x\t  \bar{\L}(y) \t z + x\t y\t  \bar{\L}^\dagger(z)\\
&+\frac{\tau}{|J|}\sum_{j\in J} \Big( \L_j^2(x)\t y\t z + 2\L_j(x)\t \L_j(y) \t z + 2\L_j(x)\t y \t \L_j^\dagger(z)\\
&\quad + 2x\t \L_j(y) \t \L_j^\dagger(z) + x\t \L_j^2(y) \t z +x\t y\t (\L_j^\dagger)^2(z) \Big)\\
\check{L}^{(3,2)}(x\t y\t z) =& 
\bar{\L}^\dagger(x)\t y\t z + x\t  \bar{\L}(y) \t z + x\t y\t  \bar{\L}^\dagger(z)\\
&+\frac{\tau}{|J|}\sum_{j\in J} \Big((\L_j^\dagger)^2(x)\t y\t z + 2\L_j^\dagger(x)\t \L_j(y) \t z + 2\L_j^\dagger(x)\t y \t \L_j^\dagger(z)\\
&\quad + 2x\t \L_j(y) \t \L_j^\dagger(z) + x\t \L_j^2(y) \t z +x\t y\t (\L_j^\dagger)^2(z)\Big).
\end{align*}
\end{proposition}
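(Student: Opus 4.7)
Write $a_{ij}(t):=\langle e_j, \alpha_t'(e_i)\rangle$ and expand the modulus squared in the definition of $F_t$:
\[
1 - F_t \;=\; \frac{1}{d}\sum_{k,l}\bigl(\delta_{kl} - \delta_{kl}(a_{kl}+\overline{a_{kl}}) + |a_{kl}|^2\bigr).
\]
Taking expectations: the constant term is trivial; the linear term follows from Theorem \ref{th:cont-limit} via $\E[a_{kl}]=\langle e_l, e^{t\hat L}(e_k)\rangle$ together with the identity $\overline{\langle e_k, A(e_k)\rangle}=\langle e_k, A^\dagger(e_k)\rangle$, producing $\delta_{kl}\langle e_l, (e^{t\hat L}+e^{t\hat L^\dagger})(e_k)\rangle$; for the modulus-squared term $|a_{kl}|^2=|f_{e_k,e_l}(\alpha_t')|^2$, I repeat the analytic-vector argument preceding Theorem \ref{th:cont-limit} with $|f_{e_k,e_l}^{[\infty,1]}|^2\in C_b^2(G^{[1]})$ in place of $f_{kl}^{[\infty,1]}$: Proposition \ref{prop:modulus} identifies $L|f|^2$ with the action of $\check L^{(2)}$, iteration shows $K^n|f|^2$ corresponds to $(\check L^{(2)})^n$ (bounded uniformly on $G^{[1]}$, hence analytic), and exponentiating yields $\E[|a_{kl}|^2]=\langle e_l\t e_k, e^{t\check L^{(2)}}(e_k\t e_l)\rangle$. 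Assembling the three contributions gives the stated formula for $1-\E[F_t]$.

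For the variance I use $\V[F_t]=\E[(1-F_t)^2]-(1-\E[F_t])^2$ and square the above expansion. Each summand of $(1-F_t)^2$ is a product over two index pairs $(i,j)$ and $(k,l)$ of the three-term factor $\delta_{\cdot\cdot}-\delta_{\cdot\cdot}(a+\overline{a})+|a|^2$; distributing and symmetrising under $(i,j)\leftrightarrow(k,l)$ leaves six classes of expectation to compute, namely the constant, two linear, two one-modulus, one bilinear $\delta_{ij}\delta_{kl}(a_{ij}+\overline{a_{ij}})(a_{kl}+\overline{a_{kl}})$, two cubic $\delta_{ij}(a_{ij}+\overline{a_{ij}})|a_{kl}|^2$, and one quartic $|a_{ij}|^2|a_{kl}|^2$. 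Each class is handled by the same uniform recipe: rewrite the polynomial as a product of factors $f_{x,y}$ (unconjugated) and $\overline{f_{x,y}}=\langle g(x),y\rangle$ (conjugated), pass to smooth truncations in $C_b^2(G^{[1]})$, compute $L$ of the product by Leibniz, and iterate via the analytic-vector bound to exponentiate. The key rule — extending the $n=2$ calculation of Proposition \ref{prop:modulus} — is that each factor contributes a tensor slot in which the drift $\bar\L$ and the diffusion $\L_j$ act; if the factor is conjugated these are replaced by $\bar\L^\dagger$ and $\L_j^\dagger$ in that slot (since $\overline{\langle y, g(x)\rangle}=\langle x, g^\dagger(y)\rangle$ and the left-flow of $g^\dagger$ is generated by $\L_j^\dagger$), and $D_{\L_j}^2$ additionally contributes the usual cross-terms $2\L_j^{(\dagger)}\t\L_j^{(\dagger)}$ between every pair of slots. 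This produces the four conjugation patterns $\check L^{(1,1)},\check L^{(1,2)},\check L^{(1,2)\dagger},\check L^{(1,1)\dagger}$ for the bilinear class, the two patterns $\check L^{(3,1)},\check L^{(3,2)}$ for the cubic class (corresponding to the two conjugations of $a_{ii}$), and $\check L^{(4)}$ for the quartic class. Subtracting $(1-\E[F_t])^2$ then yields the stated variance formula.

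The substantive content is already contained in Propositions \ref{prop:higher-mom} and \ref{prop:modulus} together with the analytic-vector exponentiation machinery of Theorem \ref{th:cont-limit}; the main obstacle in Proposition \ref{prop:EFtVarFt} itself is purely combinatorial, namely the systematic enumeration of the nine cross-terms of $(1-F_t)^2$, the identification of the corresponding monomials in $a_{\cdot\cdot}$ and $\overline{a_{\cdot\cdot}}$, and the matching of each conjugation pattern to the correct slot-by-slot decoration of the tensor generator $\check L^{(\cdot)}$. Nothing genuinely new beyond the tools of Section \ref{sec:three} is needed.
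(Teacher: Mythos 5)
Your proposal is correct and takes essentially the same route as the paper's own proof: expand the squared moduli of the matrix elements of $\id-\alpha'_t$, treat each resulting (anti-)multilinear monomial by applying the generator $L$ via the Leibniz rule as in Propositions \ref{prop:higher-mom} and \ref{prop:modulus} (with $\bar\L^\dagger,\L_j^\dagger$ acting in the conjugated slots), exponentiate through the analytic-vector argument, and subtract $(1-\E[F_t])^2$. Like the paper, you verify the mechanism on a representative conjugation pattern and treat the remaining tensor generators $\check{L}^{(1,1)},\check{L}^{(1,2)\dagger},\check{L}^{(3,1)},\check{L}^{(3,2)},\check{L}^{(4)}$ as analogous, so nothing essential differs.
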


\begin{proof}
Since we know $\alpha'_t$, we find:
\begin{equation}\label{eq:EFtProof}
\begin{aligned}
1- \E [F_t] 
=& \frac{1}{d}\sum_{k,l=1}^d \E [|\langle e_l, (\id - \alpha'_t)(e_k) \rangle|^2] \\
=& \frac{1}{d}\sum_{k,l=1}^d \E [\delta_{k,l} - 2\delta_{k,l}  \Re \langle e_l, \alpha'_t(e_k) \rangle + |\langle e_l, \alpha'_t(e_k) \rangle|^2] \\
=& \frac{1}{d}\sum_{k,l=1}^d \Big(\delta_{k,l} - 2\delta_{k,l} T_t \Big(\Re f^{[\infty,1]}_{kl} (\unit) \Big) + T_t |f^{[\infty,1]}_{kl}|^2 (\unit)\Big) \\
=& \frac{1}{d}\sum_{k,l=1}^d \Big(\delta_{k,l} - \delta_{k,l} \sum_{n=0}^\infty \frac{t^n}{n!} K^n\Big(f^{[\infty,1]}_{kk} + \overline{f^{[\infty,1]}_{kk}}\Big) (\unit) + \sum_{n=0}^\infty \frac{t^n}{n!} K^n \Big(|f^{[\infty,1]}_{kl}|^2\Big) (\unit)\Big)\\
=&\frac{1}{d}\sum_{k,l=1}^d \Big(\delta_{k,l} - \delta_{k,l} \sum_{n=0}^\infty \frac{t^n}{n!} (\langle e_k, \hat{L}^n(e_k) \rangle + \langle e_k, (\hat{L}^\dagger)^n(e_k)\rangle \\
&+ \sum_{n=0}^\infty \frac{t^n}{n!} \langle e_l\otimes e_k, (\check{L}^{(2)})^m (e_k\otimes e_l) \rangle\Big)\\
=& \frac{1}{d}\sum_{k,l=1}^d
\Big( \delta_{k,l} - \delta_{k,l} \langle e_k, (\rme^{t\hat{L}} +\rme^{t\hat{L}^\dagger})(e_k) \rangle + \langle e_l\otimes e_k, \rme^{t\check{L}^{(2)}}(e_k\otimes e_l) \rangle \Big).
\end{aligned}
\end{equation}
Here the third equality follows from \eqref{eq:CLTexp} and the quotient procedure; the fifth from the Leibniz rule and Proposition \ref{prop:modulus}, noticing that $\overline{f_{kk}}(g) = \langle g e_k, e_k \rangle$ and $L^n \overline{f_{kk}}(\unit) = \langle \hat{L}^n(e_k), e_k \rangle =  \langle e_k, (\hat{L}^\dagger)^n (e_k) \rangle$.

The variance is obtained analogously:
\begin{align*}
\V[F_t] =& \E[F_t^2]- \E [F_t]^2 = \E[(1-F_t)^2]- \E[1-F_t]^2\\
=& \frac{1}{d^2}\sum_{i,j,k,l=1}^d \E [|\langle e_i, (\id - \alpha'_t)(e_j) \rangle|^2|\langle e_l, (\id - \alpha'_t)(e_k) \rangle|^2] \\
&- \Big( \frac{1}{d}\sum_{k,l=1}^d \E [|\langle e_l, (\id - \alpha'_t)(e_k) \rangle|^2]\Big)^2\\
=& \frac{1}{d^2}\sum_{i,j,k,l=1}^d \E\Big[ \delta_{k,l}\delta_{i,j}
- 2 \delta_{i,j}\delta_{k,l} \langle e_l, (\alpha'_t+\alpha'^\dagger_t)(e_k) \rangle\\ 
&\qquad +2 \delta_{i,j} \langle e_l\t e_k, (\alpha'_t\t\alpha'^\dagger_t)(e_k\t e_l) \rangle\\ 
&\qquad+  \delta_{i,j}\delta_{k,l} \langle e_i\t e_k, (\alpha'_t\t\alpha'_t+\alpha'^\dagger_t\t\alpha'_t+\alpha'_t\t\alpha'^\dagger_t
+\alpha'^\dagger_t\t\alpha'^\dagger_t)
(e_i\t e_k) \rangle \\
&\qquad - 2 \delta_{i,j} \langle e_i\t e_l\t e_k, ((\alpha'_t+\alpha'^\dagger_t)\t\alpha'_t\t\alpha'^\dagger_t)(e_i\t e_k\t e_l) \rangle \\
&\qquad+ \langle e_j\t e_i\t e_l\t e_k, (\alpha'_t\t\alpha'^\dagger_t\t\alpha'_t\t\alpha'^\dagger_t)(e_i\t e_j\t e_k\t e_l) \rangle\Big]\\
&-\frac{1}{d^2}\Big(\sum_{k,l=1}^d \Big(\delta_{k,l} - 2\delta_{k,l}  \langle e_l, (\rme^{t\hat{L}}+\rme^{t\hat{L}^\dagger})(e_k) \rangle + \langle e_l\otimes e_k, \rme^{t\check{L}^{(2)}}(e_k\otimes e_l) \rangle \Big)\Big)^2.
\end{align*}
The terms in the first sum are all $0,1,2,3,4$-(anti-)linear expressions, respectively, of the type investigated in Propositions \ref{prop:higher-mom} and \ref{prop:modulus}. Following the proof there, we have
\begin{align*}
\E[\langle e_i\t e_k, (\alpha'^\dagger_t\t\alpha'_t)(e_i\t e_k) \rangle]
=& \E[ \overline{f_{ii}^{[\infty]}}f_{kk}^{[\infty]} \circ\alpha'_t]\\
=& \sum_{n=0}^\infty \frac{t^n}{n!} L^n \big(\overline{f_{ii}^{[\infty]}}f_{kk}^{[\infty]} \big)(\unit)\\
=& \sum_{n=0}^\infty \frac{t^n}{n!} 
\langle e_i\t e_k, (\check{L}^{(1,2)})^n(e_i\t e_k) \rangle
\end{align*}
with
\begin{align*}
\check{L}^{(1,2)}(x\t y) =& 
\bar{\L}^\dagger(x)\t y + x\t \bar{\L}(y)\\
&+\frac{\tau}{|J|}\sum_{j\in J} (\L_j^\dagger)^2(x)\t y + 2\L_j^\dagger(x)\t \L_j(y) + x\t \L_j^2(y)
\end{align*}
because
\begin{align*}
L \big(\overline{f_{ii}^{[\infty]}}f_{kk}^{[\infty]} \big)(\unit)
=& (D_{\bar{\L}}\overline{f_{ii}^{[\infty]}}) f_{kk}^{[\infty]}(\unit) 
+ \overline{f_{ii}^{[\infty]}} (D_{\bar{\L}}f_{kk}^{[\infty]})(\unit)\\
&+ \frac{\tau}{|J|} \sum_{j\in J} (D_{\L_j}^2\overline{f_{ii}^{[\infty]}})f_{kk}^{[\infty]}(\unit)
+ 2 (D_{\L_j}\overline{f_{ii}^{[\infty]}})(D_{\L_j}f_{ii}^{[\infty]})(\unit)
+ \overline{f_{kk}^{[\infty]}} (D_{\L_j}^2f_{kk}^{[\infty]})(\unit)\\
=& \frac{\rmd}{\rmd t}\overline{f_{ii}^{[\infty]}(\rme^{t\bar{\L}})} f_{kk}^{[\infty]}(\unit) 
+ \overline{f_{ii}^{[\infty]}(\unit)} \frac{\rmd}{\rmd t}f_{kk}^{[\infty]}(\rme^{t\bar{\L}})\\
&+ \frac{\tau}{|J|} \sum_{j\in J} \frac{\rmd^2}{\rmd t\rmd s} \overline{f_{ii}^{[\infty]}(\rme^{s\L_j}\rme^{t\L_j})}f_{kk}^{[\infty]}(\unit)\\
&\qquad + 2\frac{\rmd^2}{\rmd t\rmd s} \overline{f_{ii}^{[\infty]}(\rme^{s\L_j})} f_{ii}^{[\infty]}(\rme^{t\L_j})
+ \overline{f_{kk}^{[\infty]}(\unit)} \frac{\rmd^2}{\rmd t\rmd s}f_{kk}^{[\infty]}(\rme^{s\L_j}\rme^{t\L_j}) \restriction_{s=t=0}\\
=& \langle e_i\t e_k, \bar{\L}^\dagger(e_i)\t e_k + e_i\t \bar{\L}(e_k)\rangle\\
&+ \frac{\tau}{|J|} \sum_{j\in J} \langle e_i\t e_k, (\L_j^\dagger)^2(e_i)\t e_k  + 2 \L_j^\dagger(e_i)\t \L_j(e_k) + e_i\t \L_j^2(e_k)\rangle\\
=& \langle e_i\t e_k, \check{L}^{(1,2)}(e_i\t e_k) \rangle.
\end{align*}
For the other 2-(anti-)linear expressions we obtain similar results but with operators $\check{L}^{(1,1)},\check{L}^{(1,1)\dagger},\check{L}^{(1,2)\dagger}$ instead. The remaining terms are treated analogously, by letting $L$ act on the corresponding $m$-(anti-)linear functions, \eg the 3-(anti-)linear case is obtained writing
\begin{align*}
\E [ (f_{ii}^{[\infty]}+\overline{f_{ii}^{[\infty]}}) f_{kl}^{[\infty]} \overline{f_{kl}^{[\infty]}} \circ \alpha'_t] 
=& \sum_{n=0}^\infty \frac{t^n}{n!}L^n ((f_{ii}^{[\infty]}+\overline{f_{ii}^{[\infty]}}) f_{kl}^{[\infty]} \overline{f_{kl}^{[\infty]}})\\
=& \sum_{n=0}^\infty \frac{t^n}{n!}
\langle e_i\t e_l\t e_k, ((\check{L}^{(3,1)})^n+ (\check{L}^{(3,2)})^n)(e_i\t e_k\t e_l \rangle.
\end{align*}
Putting together all of this and expressing the power series back again as exponential functions, we finally obtain the statement in the proposition:
\begin{align*}
\V[F_t] =&\frac{1}{d^2}\sum_{i,j,k,l=1}^d \E\Big[ \delta_{k,l}\delta_{i,j}
- 2 \delta_{i,j}\delta_{k,l} \langle e_l, (\alpha'_t+\alpha'^\dagger_t)(e_k) \rangle\\ 
&\qquad +2 \delta_{i,j} \langle e_l\t e_k, (\alpha'_t\t\alpha'^\dagger_t)(e_k\t e_l) \rangle\\ 
&\qquad+  \delta_{i,j}\delta_{k,l} \langle e_i\t e_k, (\alpha'_t\t\alpha'_t+\alpha'^\dagger_t\t\alpha'_t+\alpha'_t\t\alpha'^\dagger_t
+\alpha'^\dagger_t\t\alpha'^\dagger_t)
(e_i\t e_k) \rangle \\
&\qquad - 2 \delta_{i,j} \langle e_i\t e_l\t e_k, ((\alpha'_t+\alpha'^\dagger_t)\t\alpha'_t\t\alpha'^\dagger_t)(e_i\t e_k\t e_l) \rangle \\
&\qquad+ \langle e_j\t e_i\t e_l\t e_k, (\alpha'_t\t\alpha'^\dagger_t\t\alpha'_t\t\alpha'^\dagger_t)(e_i\t e_j\t e_k\t e_l) \rangle\Big]\\
&-\frac{1}{d^2}\Big(\sum_{k,l=1}^d \Big(\delta_{k,l} - 2\delta_{k,l}  \langle e_l, (\rme^{t\hat{L}}+\rme^{t\hat{L}^\dagger})(e_k) \rangle + \langle e_l\otimes e_k, \rme^{t\check{L}^{(2)}}(e_k\otimes e_l) \rangle \Big)\Big)^2\\
=& \frac{1}{d^2}\sum_{i,j,k,l=1}^d \Big( \delta_{k,l}\delta_{i,j} 
- 2 \delta_{i,j}\delta_{k,l} \langle e_l, (\rme^{t\hat{L}}+\rme^{t\hat{L}^\dagger})(e_k) \rangle \\
&\qquad+2 \delta_{i,j} \langle e_l\t e_k, \rme^{t\check{L}^{(2)}}(e_k\t e_l) \rangle\\ 
&\qquad+  \delta_{i,j}\delta_{k,l} \langle e_i\t e_k, (\rme^{t\check{L}^{(1,1)}}+\rme^{t\check{L}^{(1,2)}} +\rme^{t\check{L}^{(1,2)\dagger}}+\rme^{t\check{L}^{(1,1)\dagger}})
(e_i\t e_k) \rangle \\
&\qquad - 2 \delta_{i,j} \langle e_i\t e_l\t e_k, (\rme^{t\check{L}^{(3,1)}}+\rme^{t\check{L}^{(3,2)}})(e_i\t e_k\t e_l) \rangle \\
&\qquad+ \langle e_j\t e_i\t e_l\t e_k, \rme^{t\check{L}^{(4)}}(e_i\t e_j\t e_k\t e_l) \rangle\Big)\\
&-\frac{1}{d^2}\Big(\sum_{k,l=1}^d \Big(\delta_{k,l} - 2\delta_{k,l}  \langle e_l, (\rme^{t\hat{L}}+\rme^{t\hat{L}^\dagger})(e_k) \rangle + \langle e_l\otimes e_k, \rme^{t\check{L}^{(2)}}(e_k\otimes e_l) \rangle \Big)\Big)^2.
\end{align*}
\end{proof}

For comparison reasons and some applications in \cite{ABH}, we would like to state the analogous formulae for the case of the drift-like continuous-time limit in the sense of Remark \ref{rem:drift-limit}. Since $L^{(\drift)}$ can be regarded as a special case of $L$ with vanishing $\L_j$, the expressions in Proposition \ref{prop:EFtVarFt} simplify significantly and we obtain:

\begin{proposition}\label{prop:EFtVarFtdrift}
In the setting of Proposition \ref{prop:EFtVarFt} but with $L^{(\drift)}$ instead of $L$, we obtain
\[
\E^{(\drift)}[F_t] = 1- \frac{1}{d}\sum_{k,l=1}^d |\langle e_l, (\id - \rme^{t\hat{L}^{(\drift)}})(e_k) \rangle |^2
\]
and $\V^{(\drift)}[F_t] = 0$.
\end{proposition}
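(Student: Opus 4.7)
The key observation is that in the drift-like limit the diffusion coefficients $a_{kl}$ all vanish: tracing through the derivation preceding \eqref{eq:CLTresult} with the measures $\mu_n^{(\drift)}$ of \eqref{eq:mu-n2}, which carry no $n^{-1/2}$-scaled terms, one obtains $a_k = \langle \bar{\L}, X_k\rangle_\lg$ but $a_{kl}=0$, so $L^{(\drift)} = D_{\bar{\L}}$ is a purely first-order operator. The plan is to exploit this to show that the limit process is in fact deterministic, after which both claims follow at once.

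To make this precise I would identify the $C_0$-semigroup generated by $D_{\bar{\L}}$ on $C(G_c)$ (or, after the quotient procedure from Section \ref{sec:three}, on $C_b(G^{[1]})$) with the translation semigroup $T_t^{(\drift)} f(g) = f(\rme^{t\bar{\L}} g)$: this semigroup has the prescribed generator on the smooth core $\Cci_c(G_c)$, and by the standard core/uniqueness argument (Hille--Yosida) it must agree with the semigroup produced by Theorem \ref{theowehn}. Consequently the associated convolution semigroup is $\nu_t^{(\drift)} = \delta_{\rme^{t\bar{\L}}} = \delta_{\rme^{t\hat{L}^{(\drift)}}}$ (recall $\hat{L}^{(\drift)} = \bar{\L}$ from Remark \ref{rem:drift-limit}), and the Gaussian process $(\alpha'_t)_{t\in\R_+}$ of Theorem \ref{th:cont-limit} collapses in this case to the deterministic trajectory $\alpha'_t = \rme^{t\hat{L}^{(\drift)}}$ almost surely.

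Once $\alpha'_t$ is deterministic, the gate fidelity
\[
F_t = 1 - \frac{1}{d}\sum_{k,l=1}^d |\langle e_l, (\id - \alpha'_t)(e_k)\rangle|^2
\]
is itself deterministic; taking expectations reproduces the claimed closed form, and $\V^{(\drift)}[F_t] = 0$ trivially. As a consistency check one can instead retrace the proof of Proposition \ref{prop:EFtVarFt} setting $\L_j \equiv 0$: the auxiliary operators $\check{L}^{(2)}, \check{L}^{(1,1)}, \check{L}^{(1,2)}, \check{L}^{(3,1)}, \check{L}^{(3,2)}, \check{L}^{(4)}$ then degenerate into sums of single-tensor-slot copies of $\bar{\L}$ and $\bar{\L}^\dagger$ with no cross terms, so their exponentials factor as tensor products of $\rme^{t\bar{\L}}$ and $\rme^{t\bar{\L}^\dagger}$; substituting back, the quadratic term in $\E^{(\drift)}[F_t]^2$ cancels the four-tensor sum in $\E^{(\drift)}[F_t^2]$, giving $\V^{(\drift)}[F_t] = 0$ by direct computation. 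The only mildly delicate point is the core/uniqueness step identifying $T_t^{(\drift)}$; everything else is a routine unwinding.
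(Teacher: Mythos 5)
Your proposal is correct, but it takes a genuinely different route from the paper. The paper proves the proposition by pure specialisation of the machinery already set up: it reruns the computation \eqref{eq:EFtProof} with $\L_j\equiv 0$, observes that $\check{L}^{(2)}$ collapses to $\bar{\L}\t\id+\id\t\bar{\L}^\dagger$ so that its exponential factorises into $\rme^{t\bar{\L}}\t\rme^{t\bar{\L}^\dagger}$ (giving the stated modulus-squared formula for $\E^{(\drift)}[F_t]$), and then computes $\E^{(\drift)}[F_t^2]$ the same way, finding that the fourth-moment sum factorises into exactly the square of the second-moment sum, whence $\V^{(\drift)}[F_t]=0$ by direct cancellation. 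You instead argue at the level of the limit process: since $a_{kl}=0$ for the measures \eqref{eq:mu-n2}, the generator $L^{(\drift)}=D_{\bar{\L}}$ is first order, the semigroup is the translation semigroup $f\mapsto f(\rme^{t\bar{\L}}\,\cdot\,)$ by a core/uniqueness argument, hence $\nu_t^{(\drift)}=\delta_{\rme^{t\bar{\L}}}$ and the process is almost surely the deterministic trajectory $\alpha'_t=\rme^{t\hat{L}^{(\drift)}}$; both claims (and indeed the vanishing of \emph{all} central moments, not just the variance) then follow trivially. Your route is more conceptual and directly substantiates the paper's remark after the proposition that the vanishing variance reflects a law-of-large-numbers degeneration, at the price of the extra functional-analytic step identifying the semigroup (standard: $C^2(G_c)$ is dense, contained in both domains, and invariant under the translation semigroup, so it is a core, and two $C_0$-semigroups whose generators agree on a common core coincide — one should also note that left translation, being proper, extends to $G_c$); the paper's route needs no new input beyond Propositions \ref{prop:higher-mom} and \ref{prop:modulus}, but only establishes the two stated identities. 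Your consistency check at the end is essentially the paper's proof, so you have in effect covered both arguments.
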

Some readers might find the vanishing variance intuitively expected, given that the limiting procedure corresponds somehow to the classical law of large numbers where convergence is almost surely to the (non-constant but time-dependent) expectation value. 

\begin{proof}
The expression for $\E^{(\drift)}$ follows immediately from that of $\E$ in the preceding proof, specialising to $\hat{L}^{(\drift)}$:
since
\[
\langle e_l\t e_k, \rme^{t (\bar{\L}\t \id + \id \t \bar{\L}^\dagger)}(e_k\t e_l)\rangle = \langle e_l \t \rme^{t \bar{\L}}(e_k), \rme^{t \bar{\L}}(e_k) \t e_l \rangle,
\]
the last line in \eqref{eq:EFtProof} becomes simply
\[
\langle e_l\t (\id + \rme^{t \bar{\L}})(e_k), (\id+\rme^{t \bar{\L}})(e_k)\t e_l)\rangle = |\langle e_l\t (\id + \rme^{t \bar{\L}})(e_k)|^2.
\] 

For $\V^{(\drift)}$, we analogously compute:
\begin{align*}
\V^{(\drift)}[F_t] =& \E^{(\drift)}[F_t^2]- \E^{(\drift)} [F_t]^2 = \E^{(\drift)}[(1-F_t)^2]- \E^{(\drift)} [1-F_t]^2\\
=& \frac{1}{d^2}\sum_{i,j,k,l=1}^d \E^{(\drift)} [|\langle e_i, (\id - \alpha'_t)(e_j) \rangle|^2|\langle e_l, (\id - \alpha'_t)(e_k) \rangle|^2] \\
&- \Big( \frac{1}{d}\sum_{k,l=1}^d \E^{(\drift)} [|\langle e_l, (\id - \alpha'_t)(e_k) \rangle|^2]\Big)^2\\
=& \frac{1}{d^2}\sum_{i,j,k,l=1}^d |\langle e_i, (\id - \rme^{t\hat{L}^{(\drift)}})(e_j) \rangle|^2 |\langle e_l, (\id - \rme^{t\hat{L}^{(\drift)}})(e_k) \rangle|^2 \\
&- \Big(\frac{1}{d}\sum_{k,l=1}^d |\langle e_l, (\id - \rme^{t\hat{L}^{(\drift)}})(e_k) \rangle|^2\Big)^2\\
=& 0.
\end{align*}
\end{proof}

\begin{example}
We return to our former illustrative Example \ref{ex:1}. Since $\L= \rmi\Ad(H)$, we find that $\L^\dagger=-\L$ and hence 
\[
\hat{L}^\dagger= \hat{L}, \quad \check{L}^{(2)} = \hat{L}^{(2)},
\]
as follows immediately from the respective definition in Propositions \ref{prop:higher-mom} and \ref{prop:modulus}.
For short times $t$ the results of Proposition \ref{prop:EFtVarFt} become:
\begin{align*}
1-\E[F_t] \approx& -\frac{1}{4^N}\sum_{k}^{4^N} 2 t \langle e_k, \hat{L}(e_k)\rangle 
+\frac{1}{4^N}\sum_{k,l}^{4^N} t \langle e_k\t e_l, \hat{L}^{(2)}(e_l\t e_k)\rangle \\
=& \frac{2t\tau}{4^N|J|} \sum_{j\in J} \sum_{k,l}^{4^N} |\langle e_k, [v_j H v_j, e_l]\rangle|^2\\
=& \frac{2t\tau}{4^N|J|} \sum_{j\in J} \sum_{k}^{4^N} |[v_j H v_j, e_k]|^2,
\end{align*}
which for special cases of $H$ can be further simplified, but in general will be used in this form for a computer and is of order $O(\tau t \|H\|)$. A similar procedure may be applied to variance.

In contrast, in the case of the drift-like limit, we would simply get
\[
\E^{(\drift)}[F_t]=1, \quad \V^{(\drift)}[F_t] = 0, \quad t\in\R_+,
\]
which is obviously less realistic than the diffusion-like limit, but on the other hand confirms that for unitary time-evolution $(\alpha_t)_{t\in\R_+}$ dynamical decoupling works (\ie decouples) optimally, in contrast to other types of $\alpha$, \cf also Theorem \ref{lem:deccond-unitary}! 
\end{example}

\begin{remark}[Errors]
Theorem \ref{th:cont-limit} gives us the expectation of our quantities in the continuum limit, but we must ask two questions:-
\begin{itemize}
\item[(1)] How big is the difference between continuum limit and original discrete random paths?
\item[(2)] What is the distribution of the actual (continuum limit) paths around the expectation value?
\end{itemize}
These two errors add up to give the total maximal error, which we have to estimate now.

Concerning (1), one has to work with a kind of Berry-Esseen theorem \cite{Fel} on the approximation of random walks by Brownian motion. This is quite complicated, but we satisfy ourselves here with the fact that this error tends to $0$ as $\tau\ra 0$.

Concerning (2), the deviation around the expectation value is expressed in the quantiles, which can be efficiently estimated using Chebychev's inequality \cite{Fel} together with the variance expression.
\end{remark}

\section{Estimates and application: intrinsic/extrinsic decoherence}\label{sec:five}

Suppose a given quantum system $(\H,\L)$ undergoes decoherence caused by interaction with an external quantum heat bath described by another quantum system $(\H_1, \rmi \ad(H_1))$. Then according to standard axioms of quantum mechanics, time evolution of the total (closed) system is unitary, thus described by a one-parameter automorphism family on the operators of the total Hilbert space $\H'=\H\otimes\H_1$, namely
\[
t\in\R_+ \mapsto \T\rme^{\rmi \int_0^t \ad (H'(t'))\rmd t'},
\]
and $H'$ is the (possibly time-dependent) Hamiltonian of the total system on $\H'$ and the time-ordered integral is defined in analogy to \eqref{eq:time-order}.
The heat bath may be infinite-dimensional separable, but the involved Hamiltonian $H'$ is henceforth supposed to be uniformly bounded on compact intervals. It is unclear whether dynamical decoupling works without this assumption, and maybe alternative requirements would have to be made in case of unboundedness, \cf \cite[App.]{ABH} for further discussion.

The actual dynamics perceived on the subsystem $\H$ is given by
\[
t\in\R_+ \mapsto \alpha_t := \EE\circ \T \rme^{\rmi \int_0^t \ad (H'(t'))\rmd t'}(\cdot\otimes\rho^\theta),
\]
where $\EE:B(\H')\ra B(\H)$ is the partial trace (conditional expectation) onto the subsystem and $\rho^\theta$ the initial state of the heat bath \cite{LB,pb}. The resulting perceived dynamics $(\alpha_t)_{t\in\R_+}$ then becomes a family of CPT maps. Under special assumptions on $H'$, it actually produces the CPT semigroup with infinitesimal generator $\L$, the Lindblad operator, but usually $\alpha_t$ is no longer an automorphism. We call this phenomenon, where a CPT semigroup time evolution arises from interaction with an external quantum heat bath and unitary time evolution on the total system, \emph{extrinsic decoherence} because the non-unitarity of time evolution of the original system is caused by interaction with the external heat bath.

In contrast to this, \emph{intrinsic decoherence} we call the situation where time evolution of a \emph{closed} system $(\H,\L)$ is no longer unitary and the non-unitarity is intrinsic to the system, \ie does not arise from (unitary) interaction with a heat bath. It is a fundamental question whether this actually occurs in nature or whether the axiom of unitarity is always fulfilled -- on a sufficiently large total system.
Mathematically the two cases are described in the same way (by CPT semigroups with unitary dilations), and also physically with usual observations they seem to be indistinguishable. 

However, applying dynamical decoupling in the case of the above type of extrinsic decoherence, the time evolution of the total system is unitary, and so the perceived evolution on the subsystem is given by the discrete stochastic process
\[
\alpha^{(\tau)}_{n\tau} = \EE \circ \prod_{i=1}^n \Ad(v_{j_i}\otimes\unit) \circ \T \rme^{\rmi \int_{(i-1)\tau}^{i\tau} \ad (H'(t'))\rmd t'} \circ \Ad(v_{j_i}^*\otimes \unit)(\cdot\otimes\rho^\theta).
\]
Now we notice that, if \eqref{eq:deccond1} is satisfied for all $x\in B(\H)$, then it is also satisfied for all $x\in B(\H')$ modulo $\unit\otimes B(\H_1)$. In fact, $x\in B(\H')$ can be written as a finite sum $\sum_k y_k\otimes z_k + \unit\otimes \tilde{z}$, with certain traceless $y_k\in B(\H)$ and with $z_k,\tilde{z}\in B(\H_1)$, and then
\[
\frac{1}{|J|}\sum_{j\in J}\sum_{k} 
(v_j\otimes \unit)(y_k\otimes z_k)(v_j\otimes \unit)^* +\unit \otimes \tilde{z}
= \frac{1}{|J|}\sum_k \Big(\sum_{j\in J} v_j y_k v_j^*\Big) \otimes z_k + \unit \otimes \tilde{z}
=  \unit \otimes \tilde{z}.
\]

Consider now for $x$ the (possibly time-dependent) Hamiltonian $H'=\sum_k H_{0,k}\otimes H_{1,k} + \unit\otimes H_1$. The heat bath is by definition in a thermal equilibrium state $\rho^\theta$ independent of time, \ie $\ad(H_1)(\rho^\theta)=0$. Let $\L':=\rmi \ad(H')$ be the (purely unitary) Lindbladian of the total system and thus $\bar{\L}'=\rmi\ad(\unit\t H_1)$, so $\bar{\L}(x\otimes\rho^\theta)=0$, for all $x\in A$. Then we obtain
\[
\hat{L}'= \bar{\L}' + \frac{\tau}{|J|}\sum_{j\in J} \Ad(v_j\otimes\unit) \circ (\L'-\bar{\L}')^2 \circ \Ad(v_j^*\otimes\unit),
\]
and hence 
\[
\E [\rho_t] = \EE \circ \T\rme^{\int_0^t\hat{L}'(t')\rmd t'}(\rho_0\otimes \rho^\theta).
\]
The main dynamics comes from $\bar{\L}'$, which leaves $\rho\otimes\rho^\theta$ invariant, but $\L_j'$ changes it, so that higher-order terms disturb the invariance of the state.

We can conclude: if the system dynamics is determined by extrinsic decoherence then the decoupling condition is satisfied in first-order approximation and the total time evolution under decoupling in first-order approximation in $\tau$ and $t$ is described as in the unitary case; $\unit\t\rho^\theta$ will in general not be invariant under $\L_j'$, but those effects are of higher order in $\tau$.


We would like to have an estimate of $\E[F_t]$ that depends only on $t$ and the coupling strength, distinguishing between the two extremal cases of purely extrinsic decoherence (\ie purely unitary on the dilation: $\Psi'=0$ and $a'=-a'^*$ in the notation of Theorem \ref{lem:deccond-unitary}) and purely intrinsic decoherence ( $\Psi'\not=0$ and $a=a^*$).

\begin{theorem}\label{th:EFt-bounds}
Given a quantum system $(\H,\L)$ with decoupling set $V$ and the previous notation, write $\Gamma:=\max\{\|\L\|,\|\L'\|,\|\bar{\L}\|\}$. Then in the drift-like limit of Remark \ref{rem:drift-limit}, for purely extrinsic decoherence we have 
\[
\E^{(\drift)}[F_t^{(extr)}] = 1.
\]
An approximate upper bound for the expectation of the fidelity in the case of purely intrinsic decoherence, in the limit of $\tau\ll t\ll 1/\Gamma$, is asymptotically given by
\[
\E^{(\drift)}[F_t^{(intr)}] \lesssim 1 - \frac{1}{d}t^2 \|\hat{L}^{(\drift)}\|^2
\]
If in addition $\L=\L^\dagger$ (so-called purely intrinsic dephasing) this can be made more precise:
\[
\E^{(\drift)}[F_t^{(intr)}] \le 1 - \frac{1}{d}(1- \rme^{-t \|\L\|/|J|})^2.
\]
In the (physically more realistic) diffusion-like limit of Theorem \ref{th:cont-limit}, a lower bound for the fidelity of purely extrinsic decoherence is asymptotically given by
\[
\E[F_t^{(extr)}] \gtrsim 1 -\frac{2d}{|J|} \tau \int_0^t\|\L'_0(t')\|^2\rmd t',
\]
in terms of the (possibly time-dependent) Lindbladian $\L'$ on the dilated system, while an upper bound in the case of purely intrinsic decoherence is asymptotically given by
\[
\E[F_t^{(intr)}] \lesssim 1- \frac{2}{d\, |J|} \tau t \|\L-\bar{\L}\|^2 - \frac{1}{d} t^2 \|\bar{\L}\|^2.
\]
\end{theorem}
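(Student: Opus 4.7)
The plan is to exploit the closed-form expressions for $\E^{(\drift)}[F_t]$ in Proposition \ref{prop:EFtVarFtdrift} and for $\E[F_t]$ in Proposition \ref{prop:EFtVarFt}, to specialise to extrinsic vs.\ intrinsic decoherence via the dilation construction of the present section, and then to Taylor-expand in the regime $\tau\ll t\ll 1/\Gamma$ to extract the asymptotic bounds.

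For the drift-like limit, Proposition \ref{prop:EFtVarFtdrift} gives the clean Hilbert--Schmidt identity
$$
1-\E^{(\drift)}[F_t]=\frac{1}{d}\bigl\|\id-\rme^{t\hat L^{(\drift)}}\bigr\|_{\mathrm{HS}}^2.
$$
In the purely extrinsic case I would work on the dilation: since $\L'=\rmi\ad(H')$ is purely unitary, Theorem \ref{lem:deccond-unitary} forces the averaged generator to reduce to $\bar\L'=\rmi\ad(\unit\t H_1)$, which annihilates any state of the form $\rho\t\rho^\theta$ because $[H_1,\rho^\theta]=0$; hence the effective $\hat L^{(\drift)}$ on the subsystem vanishes and $\E^{(\drift)}[F_t^{(\mathrm{extr})}]=1$. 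For general intrinsic decoherence, combining $\|\cdot\|_{\mathrm{HS}}\ge\|\cdot\|_{\mathrm{op}}$ with the expansion $\id-\rme^{t\bar\L}=-t\bar\L+O(t^2\|\bar\L\|^2)$ gives the asymptotic lower bound $\frac{t^2}{d}\|\hat L^{(\drift)}\|^2$. In the dephasing subcase, $\L=\L^\dagger$ combined with CPT/unital duality forces $\rme^{t\L}$ to be an HS-contraction, so $\L\le 0$ and $\bar\L$ is self-adjoint negative semi-definite; this upgrades the expansion to the exact identity $\|\id-\rme^{t\bar\L}\|_{\mathrm{op}}=1-\rme^{-t\|\bar\L\|}$. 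The refined estimate $\|\bar\L\|\ge\|\L\|/|J|$ I would obtain by evaluating $-\langle v,\bar\L v\rangle$ at a unit eigenvector $v$ of $\L$ realising $\|\L\|$: the $j=0$ term contributes exactly $\|\L\|$ while the remaining $|J|-1$ terms are non-negative by negative semi-definiteness of every conjugate.

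For the diffusion-like limit I would rewrite the identity of Proposition \ref{prop:EFtVarFt} as
$$
1-\E[F_t]=\frac{1}{d}\bigl\|\id-\rme^{t\hat L}\bigr\|_{\mathrm{HS}}^2+\frac{1}{d}\sum_{k,l}\V\bigl[\langle e_l,\alpha'_t(e_k)\rangle\bigr],
$$
separating the deterministic mean contribution from the non-negative variance contribution. For purely intrinsic decoherence the first summand reproduces the lower bound $\frac{t^2}{d}\|\bar\L\|^2$ exactly as above, while the first-order Taylor expansion of the variance summand through Proposition \ref{prop:modulus}, using $\sum_j\|\L_j\|_{\mathrm{HS}}^2=|J|\|\L-\bar\L\|_{\mathrm{HS}}^2$ (by unitary-invariance of the HS norm under $\Ad(v_j)$) together with $\|\cdot\|_{\mathrm{HS}}\ge\|\cdot\|_{\mathrm{op}}$, produces the claimed $\frac{2\tau t}{d|J|}\|\L-\bar\L\|^2$. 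For purely extrinsic decoherence I would repeat the analysis on the dilation: the drift piece vanishes as in the drift-like case, and the variance picks up the leading $\tau t\|\L_0'\|^2$-term, the factor $d/|J|$ arising from the partial-trace normalisation by $\rho^\theta$ and the subsystem dimension; the time-dependent version then follows from Remark \ref{rem:non-const-evo}.

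The principal obstacles I anticipate are twofold: first, the sharp estimate $\|\bar\L\|\ge\|\L\|/|J|$ in the dephasing case rests on the slightly subtle fact that a self-adjoint CPT generator is negative semi-definite on the HS Hilbert space $A$ — this follows from simultaneous op-norm and trace-norm contractivity of $\rme^{t\L}$ combined with Riesz--Thorin interpolation, but needs to be stated carefully; second, cleanly isolating the leading-order terms of the variance summand, so as to separate the drift-order $t^2\|\bar\L\|^2$ from the diffusion-order $\tau t\|\L-\bar\L\|^2$ without spurious cross-terms of order $\tau t^2$, requires careful book-keeping with the operators $\check L^{(2)}$, $\check L^{(1,1)}$ and $\check L^{(1,2)}$ from Propositions \ref{prop:EFtVarFt} and \ref{prop:modulus}. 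All remaining estimates reduce to routine short-time expansions once the correct leading terms have been identified.
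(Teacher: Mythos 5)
Your proposal follows essentially the same route as the paper: the drift-like bounds come from the Hilbert--Schmidt expression of Proposition \ref{prop:EFtVarFtdrift} combined with $\|\cdot\|_{\mathrm{HS}}\ge\|\cdot\|_{\mathrm{op}}$, the dephasing refinement from negativity of each conjugate $\Ad(v_j)\circ\L\circ\Ad(v_j^*)$ giving $\|\bar{\L}\|\ge\|\L\|/|J|$ (the paper phrases this as the operator inequality $\hat{L}^{(\drift)}\le\L/|J|$ in an eigenbasis), and the diffusion-like bounds from isolating the order-$\tau t$ term built from $\L_j\t\L_j^\dagger$ and the order-$t^2$ term built from $\bar{\L}\t\bar{\L}^\dagger$ in the expansion of Proposition \ref{prop:EFtVarFt}, repeated on the dilation with $\bar{\L}'(\,\cdot\,\t\rho^\theta)=0$ for the extrinsic case. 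Your bias--variance split of $1-\E[F_t]$ and the interpolation argument for HS-contractivity in the dephasing case are only cosmetic refinements of (indeed, slightly more careful than) the paper's direct expansion, so no substantive difference or gap remains.
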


\begin{proof}
We appeal to Proposition \ref{prop:EFtVarFt} for notation and the exact formulae underlying our estimates here.

For the case of pure dephasing (\ie $\L^\dagger=\L$), we first notice that $\hat{L}^{(\drift)}$, being a sum of double commutators, is a selfadjoint operator on the Hilbert space $A$. Moreover, it must be negative since $\rme^{t\hat{L}^{(\drift)}}$ is a contraction. We may suppose that the orthonormal basis $(e_k)_{k=1\ldots d}$ consists of the eigenvectors of $\hat{L}^{(\drift)}$ in decreasing order of eigenvalues. 
The smallest eigenvalue of $\hat{L}^{(\drift)}$ is less than $-\|\L\|/|J|$: in fact, the smallest eigenvalue of $\L$ is $-\|\L\|$, and $\hat{L}^{(\drift)} \le \frac{1}{|J|}\L$. Then we find
\begin{align*}
1-\E^{(\drift)}[F_t] =& \frac{1}{d}\sum_{k,l=1}^d \langle e_l, (\id - \rme^{t\hat{L}^{(\drift)}})(e_k) \rangle^2\\
\ge& \frac{1}{d} \langle e_1, (\id - \rme^{t\hat{L}^{(\drift)}})(e_1) \rangle^2
= \frac{1}{d} (1- \rme^{-t\|\L\|/|J|})^2.
\end{align*}

For general $\L$, we can say at least
\begin{align*}
1-\E^{(\drift)}[F_t] 
=& \frac{1}{d}\sum_{k=1}^d \langle (\id - \rme^{t\hat{L}^{(\drift)}})(e_k) , (\id - \rme^{t\hat{L}^{(\drift)}})(e_k) \rangle\\
\ge& \frac{1}{d} \| \id - \rme^{t\hat{L}^{(\drift)}} \|^2 \approx \frac{1}{d} t^2 \|\hat{L}^{(\drift)}\|^2.
\end{align*}

Let us come to the actual diffusion-like limit. Using the power series of the exponential function and neglecting higher order terms, we obtain
\begin{align*}
1- \E[F_t] =&
 \frac{1}{d}\sum_{k,l=1}^d \Big( \delta_{k,l} -\delta_{k,l} \langle e_l, (\rme^{t\hat{L}} +\rme^{t\hat{L}^\dagger})(e_k) \rangle + \langle e_l\otimes e_k, \rme^{t\check{L}^{(2)}}(e_k\otimes e_l) \rangle \Big)\\
 \approx& t\Big( \frac{\tau}{d\, |J|} \sum_{k,l=1}^d \sum_{j\in J} 
\langle e_l\otimes e_k, (\L_j\t\L_j^\dagger + \L_j^\dagger\t\L_j)(e_k\otimes e_l) \rangle\Big)\\
&+t^2 \Big( \frac{1}{d}\sum_{k,l}^d \langle e_l\otimes e_k, (\bar{\L}\t\bar{\L}^\dagger)(e_k\otimes e_l) \rangle 
\Big)\\
=& t\Big( \frac{\tau}{d\, |J|} \sum_{j\in J}
2 \tr_{A\t A} (\L_j\t\L_j^\dagger \circ \phi)\Big)\\
&+t^2 \Big( \frac{1}{d} \tr_{A\t A} ( \bar{\L}\t\bar{\L}^\dagger \circ\phi)\Big)\\
\geq & \frac{2}{d\, |J|} \tau t \|\L_j\|^2 + \frac{1}{d} t^2 \|\bar{\L}\|^2, 
\end{align*}
where $j\in J$ is arbitrary, $\phi$ denotes the flip unitary on $A\t A$, and $\t_{A\t A}$ the standard (non-normalised) trace on $B(A\t A)$.

For the case of extrinsic decoherence, we have to work in the dilation algebra $A\t B(\H_1)$. Let $\L'$ be the corresponding (possibly time-dependent) Lindbladian on that algebra, corresponding to unitary time evolution. Then after decoupling we obtain
\[
\hat{L}' = \bar{\L}' + \frac{\tau}{|J|} \sum_{j\in J} (\L'_j)^2, \quad \L'_j:= \Ad(v_j\t\unit)\circ (\L'-\bar{\L}') \circ \Ad(v_j^*\t\unit),
\]
with the commutator $\bar{\L}'$ vanishing on $A\t\rho^\theta$. As in the preceding case, we find
\begin{align*}
1- \E[F_t]=&
\frac{1}{d}\sum_{k,l=1}^d \T\Big(\delta_{k,l} -\delta_{k,l} \langle e_l\t\unit, (\rme^{\int_0^t \hat{L}'(t')\rmd t'} +\rme^{\int_0^t \hat{L}'^\dagger (t')\rmd t'})(e_k\t\rho^\theta) \rangle \\
&\quad + \langle e_l\t\unit\t e_k\t\unit, \rme^{\int_0^t\check{L}'^{(2)}(t')\rmd t'}(e_k\t\rho^\theta\t e_l\t\rho^\theta) \rangle \Big)\\
\approx& \frac{\tau}{d\, |J|} \sum_{k,l=1}^d \sum_{j\in J} 
\Big\langle e_l\t\unit\t e_k\t\unit, \int_0^t(\L_j'\t\L_j'^\dagger + \L_j'^\dagger\t\L_j')(t')\rmd t' (e_k\t \rho^\theta\t e_l\t\rho^\theta) \Big\rangle\\
=& \frac{\tau}{d\, |J|} \sum_{j\in J}
2 \tr_{A\t A\t B(\H_1) \t B(\H_1)} \Big( \int_0^t(\L_j'\t\L_j'^\dagger)(t')\rmd t' \circ \phi^{\t 2} (\unit\t\rho^\theta\t\unit\t\rho^\theta)\Big)\\
\leq & \, 2d \tau \int_0^t\|\L'_0(t')\|^2\rmd t', 
\end{align*}
because $\tr_{A\t B(\H_1)}(\unit\t\rho^\theta)=d$ and $\bar{\L}'(x\otimes\rho^\theta)=0$ for all $x\in A$.
\end{proof}

Notice that these bounds are probably not sharp at all, but they should rather serve as an inspiration and starting point for finding more specific and sharper bounds. The interesting fact, in any case, is that they separate the fidelity of intrinsic and extrinsic decoherence dynamics, respectively, in the region $\tau\ll t\ll 1/\Gamma$, \cf Figure \ref{fig1}.

Moreover, under further assumptions on $\L$ like \eg $\L^\dagger=\L$ we can try to use a similar procedure in order to achieve better bounds involving directly $\|\L\|$.

\begin{figure}
\centering
\includegraphics[width=10cm]{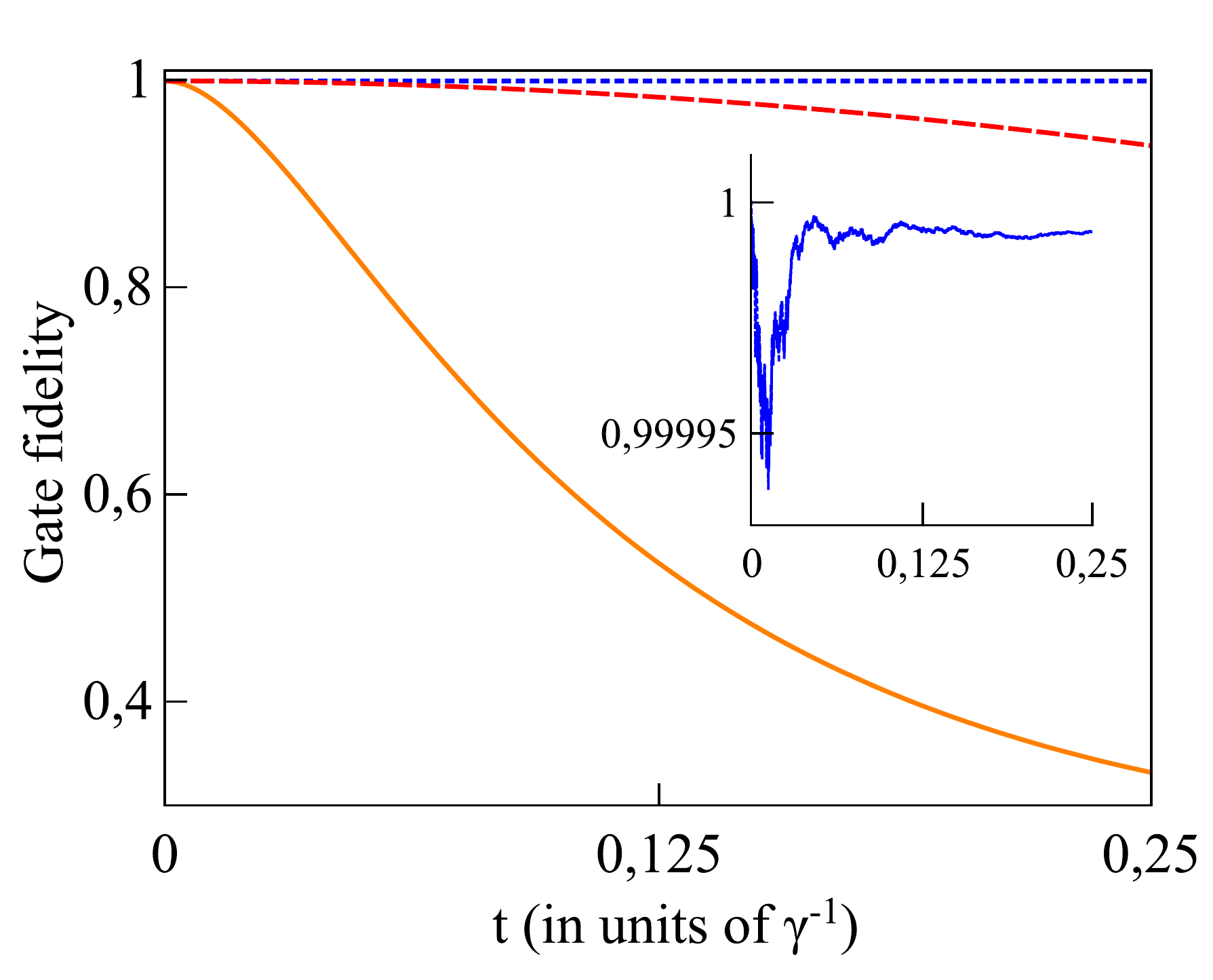}
\caption{Numerical evaluation of average $\overline{F_t^{ext}}$ (dotted blue line) coinciding with the lower bound for $\E^{(\drift)}[F_t^{ext}]\equiv 1$, average $\overline{F_t^{int}}$ (solid orange line), and the upper bound for $\E^{(\drift)}[F_t^{int}]$ from Theorem \ref{th:EFt-bounds} (dashed red line) as a function of $t$, for $\Gamma\tau=10^{-3}$ and the so-called amplitude-damping channel model with coupling strength $\gamma$ (\cf Example \ref{ex:2}(2) and \cite{ABH} for further explanation). The average was taken over 25 paths, one of them illustrated for  $F_t^{ext}$ in the inset plot.}\label{fig1}
\end{figure}

\begin{conclusion}[Application to experiment -- sketch]
If $\tau$ is sufficiently small, then for suitable $t$ the last two bounds in Theorem \ref{th:EFt-bounds} provide a separation into two disjoint ranges of the fidelity in the two intrinsic interaction cases, which allows the experimenter to identify the type of decoherence. He would have to proceed as follows:
\begin{itemize}
\item[(1)] Given the intrinsic or extrinsic coupling strength $\Gamma:=\max\{\|\L\|,\|\L'\|,\|\bar{\L}\|\}$, choose $t\ll 1/\Gamma$.
\item[(2)] Choose and vary $\tau\ll t$ in that range.
\item[(3)] Compute the fidelity of many decoupling pulse sample paths for these given values of $\Gamma$ and varying $\tau,t$, then average and extrapolate to get his averaged fidelity $\bar{F}_t$ as a function of $\tau$ and $t$.
\item[(4)] Compare it with the bounds for these given values of $\Gamma,\tau$: then in the above limit, he will find either
\[
\bar{F}_t \gtrsim 1 - 2d \tau \int_0^t\|\L'_0(t')\|^2\rmd t' \quad \mbox{or} \quad 
\bar{F}_t \lesssim 1- \frac{2}{d\, |J|} \tau t \|\L-\bar{\L}\|^2 - \frac{1}{d} t^2 \|\bar{\L}\|^2,\quad t\in\R_+.
\]
the first case corresponding to extrinsic, the second to intrinsic dephasing.
\end{itemize}

If he cannot carry out many runs, then it would also be necessary to take into account the quantiles from above in order to understand how well his experimental mean value $\bar{F}_t$ describes the analytical $\E[F_t]$. This can be done by considering higher moments $\E[F_t^n]$ and $\V[F_t]$ as in Proposition \ref{prop:EFtVarFt} For an arbitrarily large number of runs, however, this is not necessary. Figure \ref{fig1} illustrates the bounds with an average of concrete sample paths.

Although the precise relation between $\|\L\|, \|\L_0\|, \|\bar{\L}\|, \|\L'\|$ etc.~is not clear and it is therefore difficult to compare the above bounds quantitatively, these values depend somehow monotonically on one another, \ie increase or decrease synchronically. In any case, when $\tau\ra 0$ then the higher-order terms can be neglected and the difference between discrete random walk and continuum-limit tends to $0$. At this point boundedness of $H'$ is needed; otherwise alternative assumptions would have to be made that lead to future work, \cf also \cite{ABH}. Moreover, it follows from the above bounds then that $\bar{F}_t\ra 1$ in the extrinsic case, whereas $\bar{F}_t$ converges to some function $f(t)\lesssim 1-t^2\|\bar{\L}\|^2/d$ in the intrinsic case.

If extrinsic and intrinsic decoherence appear together: since the respective coupling strengths will not be known, it is impossible to compute the above bounds; yet, for fixed $t$, letting $\tau\ra 0$, the experimenter can check whether or not $\bar{F}_t$ goes to $1$, meaning pure extrinsic or intrinsic/mixed decoherence, respectively. \boxy
\end{conclusion}

\appendix

\section{Lie groups and convolution semigroups}\label{sec:app}

The aim of this appendix section is to sketch the necessary definitions and facts about Lie groups and convolution semigroups necessary to understand the third and fourth section. For a comprehensive study and the notation used in Section \ref{sec:three} we refer to any suitable textbook: \eg \cite{FH} for (linear) Lie groups and algebras, \cite{da,EN} for one-parameter semigroups, \cite[Ch.4]{gr,he} for probability and convolution measures on Lie groups.

An \emph{$N$-dimensional (real) Lie group} $G$ is an $N$-dimensional smooth manifold which has a group structure with neutral element $\mathbf{1}$ such that multiplication and inversion are smooth maps. In this paper $G$ is always a \emph{linear algebraic Lie group}, \ie a group of linear mappings on a finite-dimensional real vector space. 
The tangent space $T_\mathbf{1} G$ of $G$ in $\mathbf{1}$ forms a Lie algebra and is called the \emph{Lie algebra of $G$}, denoted $\mathfrak{g}$ with scalar product (nondegenerate bilinear form) $\langle\cdot,\cdot\rangle_\mathfrak{g}$. 

There is a canonical diffeomorphism $\exp$ from a $0$-neighbourhood in $\mathfrak{g}$ to some $\mathbf{1}$-neighbourhood $U\subset G$ mapping $0$ to $\mathbf{1}$ and called the exponential map, which in the present case can be identified with the standard exponential function of matrices. Given an orthonormal basis $(X_k)_{k=1\ldots N}$ of $\mathfrak{g}$, the corresponding \emph{(coordinate) $\mathbf{1}$-chart} is the smooth function $x: U\ra \mathbb{R}^N$ such that
\[
g= \exp \Big( \sum_{k=1}^N x_k(g) X_k \Big), \quad g\in U,
\]
and $x_k:U\ra\mathbb{R}$ is called the \emph{$k$-th coordinate map}. One may extend the functions $x_k\in C^\infty(U)$ to functions in $C^\infty_c(G)$ denoted again by $x_k$; write $G_c$ for the one-point compactification of $G$ if $G$ is noncompact, otherwise we take $G_c=G$, and every function $f\in C_c(G)$ is extended by $f(\infty):=0$ to $G_c$; this is needed in section \ref{sec:three} for technical reasons.

One notes that
\[
\frac{\rmd}{\rmd t} x_k(\rme^{t Y}) \restriction_{t=0} = \langle Y, X_k\rangle_\mathfrak{g},
\]
for every $k=1,\ldots,N$ and $Y\in\mathfrak{g}$. The directional derivative $D_Y$ for $Y\in\mathfrak{g}$ is defined by
\[
D_Y f(g) := \frac{\rmd}{\rmd t} f(\rme^{t Y}g) \restriction_{t=0}, \quad f\in C^1_c(G), \, g\in G,
\]
and one has $D_{X+\lambda Y} = D_X + \lambda D_Y$, for $X,Y\in\mathfrak{g}$ and $\lambda\in\mathbb{R}$.

The {\em convolution} of two probability measures $\mu_1,\mu_2$ on $G$ (with usual Borel $\sigma$-algebra $\mathfrak{B}(G)$) is  defined by 
\[
\mu_1*\mu_2 (A) := (\mu_1\times\mu_2)\{(g,h)\in G\times G : gh \in A\},
\quad A\in \mathfrak{B}(G).
\]
Suppose that $\mu_1$ and $\mu_2$ are supported in a subsemigroup $H\subset G$. Then for every $A\in \mathfrak{B}(G)$, we have
\begin{align*}
\mu_1*\mu_2 (A) =& (\mu_1\times\mu_2)\{(g,h)\in G\times G : gh \in A\}\\
=& (\mu_1\times\mu_2)\{(g,h)\in H\times H : gh \in A\}\\
=& (\mu_1\times\mu_2)\{(g,h)\in H\times H : gh \in A\cap H\}\\
=&\mu_1*\mu_2 (A\cap H),
\end{align*}
so $\mu_1*\mu_2$ is supported in $H$, too.

Measures and convolution on $G$ can be trivially extended to $G_c$ by setting $g\infty:=\infty g := \infty$, for all $g\in G_c$. The set of probability measures on $G_c$,  equipped with the *-weak topology and convolution as multiplication, constitutes a topological
monoid, where the Dirac measure $\delta _\unit$ serves as the neutral element. Here the {\em *-weak topology} on $G_c$ is defined as
follows: a net of measures $(\mu_i)_{i \in I}$ converges to a
limit measure $\mu$ if for all $f \in C(G_c)$ (the continuous $\R$-valued functions on $G_c$) the condition $\int_{G_c} f
\rmd\mu_i \ra \int_{G_c} f  \rmd\mu$ holds.
A  {\em continuous convolution semigroup of probability measures on $G$} is a set $(\mu_t)_{t \in \R _+}$ of probability measures on $G$ (trivially extended to $G_c$) such that $\mu_s*\mu_t =\mu_{s+t}$ 
for every $s,\ t \in \R _+$ and $\lim_{t\ra0}\mu_t=\mu_0=\delta
_{\unit}$ *-weakly.

Let $(\mu _t)_{t\in\R_+}$ be a continuous convolution semigroup of probability
measures on  $G$. For $t\in \R _+$ define the operator
\[
T_t: C(G_c) \ra C(G_c), \; (T_t f) (g) := \int_{G_c} f(gh) d\mu _t(h),\quad g\in G_c.
\]
Then $(T_t)_{t \in \R _+}$ forms a {\em strongly continuous one-parameter contraction semigroup on $C(G_c)$}.
To $(T_t)_{t\in \R _+}$ there corresponds an {\em infinitesimal generator}
\[
L := \lim_{t \ra 0} \frac{T_t -\id }{t}
\]
(in the strong operator topology) on a suitable $T_t$-invariant dense domain $\dom (L)\subset C(G_c)$.

Given a strongly continuous one-parameter semigroup $(T_t)_{t\in\R_+}$ with generator $(L,\dom(L))$ on a Banach space $E$, we write $\Cci(L):=\bigcap_{n\in\N} \dom(L^n)$. A vector $f\in\Cci(L)$ is called \emph{entire analytic for $L$} if
\[
z\in \C\mapsto \sum_{n=0}^\infty \frac{z^n}{n!} L^n f\in \Cci(L)
\]
is analytic, in which case it extends $t\in\R_+\mapsto T_t f\in E$ to an entire analytic function. Nonzero analytic vectors need not exist for one-parameter semigroups, whereas for one-parameter groups they do.

If $(T_t)_{t\in\R_+}$ is a strongly continuous one-parameter semigroup on a Banach space $E$ with generator $(L,\dom(L))$ which leaves invariant a closed subspace $E_0\subset E$, then it induces a strongly continuous one-parameter semigroup $(S_t)_{t\in\R_+}$ on the quotient Banach space $E/E_0$ with infinitesimal generator $(K,\dom(K))$ as follows: denote the quotient map by $q:E\ra E/E_0$, then
\[
S_t q(f) := q(T_t f), \quad f\in E,
\]
and $K q(f) = q(L f)$ with dense $\dom(K) = q(\dom(L))\subset E/E_0$.

Given a continuous convolution semigroup of probability measures $(\mu_t)_{t \in \R _+}$
on $G$, there exists a probability space and a $G$-valued
Markov process on this space such that its transition
probabilities from $(g,0)\in G_c \times \R _+$ to $(A,t) \in \mathfrak{B}
(G_c)\times \R _+$ are given by $(\mu_t * \delta _{g})(A)$.  
The most interesting processes on $G$
are the ones we encounter in Section \ref{sec:three}, the so-called {\em Gaussian processes}, whose contraction semigroups have generators of the form
\[
L=\sum_{k=1}^N a_k D_{X_k} +\sum_{k,l=1}^N a_{kl} D_{X_k} D_{X_l}, 
\]
with $a_k\in \R$  and $(a_{kl})_{k,l=1...N}$ forms a
positive-definite matrix, and with $\dom(L)=C^2(G_c)$.

\end{document}